\useunder{\uline}{\ul}{}
\theoremstyle{definition}
\theoremstyle{remark}
\theoremstyle{plain}
\newtheorem{theorem}{Theorem}
\newcommand{\our}[0]{\texttt{ESD}\xspace}
\newcommand{\ouryi}[0]{\texttt{ESD($\alpha = 1$)}\xspace}
\newcommand{\ourwu}[0]{\texttt{ESD($\alpha = 0.5$)}\xspace}
\newcommand{\ourerwu}[0]{\texttt{ESD($\alpha = 0.25$)}\xspace}
\newcommand{\ouryierwu}[0]{\texttt{ESD($\alpha = 0.125$)}\xspace}
\newcommand{\ourling}[0]{\texttt{ESD($\alpha = 0$)}\xspace}
\newcommand{\ourlru}[0]{\texttt{Emark}\xspace}
\newcommand{\opt}[0]{\texttt{Opt}\xspace}
\newcommand{\heu}[0]{\texttt{Heu}\xspace}
\newcommand{\ourmix}[0]{\texttt{HybridDis}\xspace}
\newcommand{\target}[0]{\texttt{target}\xspace}
\newcommand{\stitle}[1]{\vspace{.5ex}\noindent{\bf #1}}
\newcommand{\tann}[1]{\textcolor{black}{#1}}
\newcommand{\ie}{\emph{i.e.}}
\newcommand{\eg}{\emph{e.g.}}
\newcommand{\etc}{\emph{etc.}}
\newcommand{\etal}{\emph{et al.}}
\begin{document}

\title{Embedding Samples Dispatching for Recommendation Model Training in Edge Environments}

\author{Guopeng Li}
\affiliation{
  \institution{USTC}
  \country{Hefei, China}
}
\author{Haisheng Tan}
\authornote{Corresponding author.}
\affiliation{
  \institution{USTC}
  \country{Hefei, China}
}
\author{Chi Zhang}
\authornotemark[1]
\affiliation{
  \institution{Hefei University of Technology}
  \country{Hefei, China}
}
\author{Hongqiu Ni}
\affiliation{
  \institution{USTC}
  \country{Hefei, China}
}

\author{Zilong Wang}
\affiliation{
  \institution{USTC}
  \country{Hefei, China}
}
\author{Xinyue Zhang}
\affiliation{
  \institution{USTC}
  \country{Hefei, China}
}
\author{Yang Xu}
\affiliation{
  \institution{USTC}
  \country{Hefei, China}
}
\author{Han Tian}
\affiliation{
  \institution{USTC}
  \country{Hefei, China}
}


\begin{abstract}
Training deep learning recommendation models (DLRMs) on edge workers brings  several benefits, particularly in terms of data privacy protection, low latency and personalization. However, due to the huge size of embedding tables, typical DLRM training frameworks adopt one or more parameter servers to maintain global embedding tables, while leveraging the edge workers cache part of them. This incurs significant transmission cost for embedding transmissions between workers and parameter servers, which can dominate the training cycle. In this paper, we investigate how to dispatch input embedding samples to appropriate edge workers to minimize the total embedding transmission cost when facing edge-specific challenges such as heterogeneous networks and limited resources. We develop \our, a novel mechanism that optimizes the dispatch of input embedding samples to edge workers based on expected embedding transmission cost. We propose \ourmix as the dispatch decision method within \our, which combines a resource-intensive optimal algorithm and a heuristic algorithm to balance decision quality and resource consumption. We implement a prototype of \our and compare it with state-of-the-art mechanisms on real-world workloads. Extensive experimental results show that \our reduces the embedding transmission cost by up to $36.76\%$ and achieves up to $1.74\times$ speedup in end-to-end DLRM training.
\end{abstract}




\begin{CCSXML}
<ccs2012>
   <concept>
       <concept_id>10003033.10003099.10003100</concept_id>
       <concept_desc>Networks~Cloud computing</concept_desc>
       <concept_significance>500</concept_significance>
       </concept>
 </ccs2012>
\end{CCSXML}


\keywords{edge computing, embedding cache, recommendation model}

\maketitle
\thispagestyle{plain} 
\pagestyle{plain}    

\section{Introduction}\label{sec:intro}

Recommender systems have become essential in daily life, with recommendation models trained on large-scale data to learn user preferences and product characteristics, providing personalized recommendations in e-commerce~\cite{gu2021self, wag2018billio, smith2017two}, online entertainment \cite{gomez2015netflix,covington2016deep,schedl2017new}, and social media~\cite{sharma2016graphjet,boeker2022empirical,ying2018graph} to enhance user experience and drive business growth~\cite{chen2023openembedding,wang2024atrec}. In recent years, with the rapid development of mobile and ubiquitous computing, integrating recommender systems into mobile and  ubiquitous applications to provide personalized services has become a growing trend, such as energy management recommendations in the Internet of Energy~\cite{sayed2021intelligent,himeur2021survey}, health advice in health monitoring~\cite{pourpanah2024exploring}, and destination  suggestions in tourism exhibitions~\cite{su2019edge}. However, conventional recommender systems are typically deployed on cloud servers, which face data privacy concerns and network latency challenges~\cite{yin2024device,cai2024distributed,long2024diffusion,yuan2024hetefedrec}. To deliver recommendation services while ensuring privacy protection and low latency for mobile and  ubiquitous applications, training and deploying recommendation models on edge workers is a natural alternative~\cite{dac2023,gong2020edgerec,himeur2022latest}.

 \underline{D}eep \underline{L}earning \underline{R}ecommendation \underline{M}odels (DLRMs), such as 
WDL~\cite{cheng2016wide} and DFM~\cite{guo2017deepfm}, have emerged as highly effective techniques for constructing recommender systems~\cite{jiang2023mixrec}.  A typical  DLRM follows the \textit{embedding layer}  and \textit{ multi-layer perceptron (MLP)}   paradigm. The embedding layer is practically the embedding tables,  transforming the sparse input (\eg, user ID and video  ID) into  dense vector embeddings~\cite{guo2021scalefreectr,zhao2023embedding,zhang2024cafe}. The embedding tables  consume a tremendous memory footprint, with each embedding requiring a few KB, and the total memory reaching tens of GBs to TBs. In contrast, the MLP layers, typically ranging from several to hundreds of MBs, are much smaller, and make up the dense model that learns high-order feature interactions and generates the final prediction \cite{Zhou2018deep,zhag2022picasso}. 

When training DLRM on edge workers for mobile and  ubiquitous recommendation services,  the parameter server (PS) architecture is commonly employed to address the issue of huge embedding tables and the limited memory capacity of  edge workers. In the  PS architecture,  one or more memory-optimized servers (\ie, the parameter servers) maintain the globally embedding tables, while each edge worker caches a portion of embedding tables in its local memory, \ie, \textit{embedding cache}. During DLRM training,  workers pull embedding values from the parameter server and push embedding gradients back through the Ethernet, called \textit{embedding transmission}. In production workloads, a single training input sample may involve up to thousands of embeddings ~\cite{acun2021understanding}. Training DLRM  on  edge workers suffers from embedding access bottlenecks as each training iteration requires substantial embedding transmission between edge workers and the parameter server~\cite{song2023ugache}. For instance, training WDL~\cite{cheng2016wide}  with the Criteo Kaggle dataset~\cite{criteodata}  shows that  up to $90\%$ time could be spent on embedding transmission, dominating the training cycle.

In this paper, we aim to reduce the network bandwidth usage, denoted as the \emph{ total embedding transmission cost}, in resource-limited edge environments while simultaneously accelerating DLRM training. Managing the input embedding samples loaded from the data loader for each worker is an effective way to reduce the embedding transmission cost~\cite{zeng2024accelerating,agarwal2023bagpipe,kwon2022training}. Specifically, with input samples prefetching in today's data loaders~\cite{paszke2019pytorch,ma2023fec}, during the current training iteration, input samples for the next iteration can be proactively dispatched to the appropriate worker to minimize the embedding transmission cost.  Intuitively, we may dispatch input samples to the workers that maximize the overall cache hit ratio. However, unlike traditional file caching, embeddings stored on workers during DLRM training are not static, which are continuously updated through synchronization with the parameter server, making the dispatch problem non-trivial. Moreover, the heterogeneous networks and limited resources in edge environments  also bring  hardness. We summarize the problem challenges as follows.

\stitle{Cost composition.} When training DLRM with the PS architecture, it is crucial that each worker caches the latest versions of embeddings corresponding to the current input samples. If the latest embeddings are not available in the cache, they must be pulled from the parameter server (\textit{miss pull})\footnote{This paper employs Bulk Synchronous Parallel (BSP)~\cite{ghadimi2016mini} training for DLRM without sacrificing model accuracy, necessitating the use of the latest embeddings.}. For each training iteration, embedding gradients are on-demand synchronized between the workers and the parameter server (\textit{update push}). Additionally, when the embedding cache reaches its capacity, certain embeddings must be evicted to make space for new ones, and their gradients have to be transmitted to the parameter server (\textit{evict push}). Consequently, the embedding transmission cost arises from the \textit{miss pull}, \textit{update push}, and \textit{evict push} operations. Therefore, a mechanism is desired to minimize the total embedding transmission cost rather than merely optimizing the hit ratio.

\stitle{Heterogeneous networks.} When training DLRM in edge environments, there might be significantly lower-bandwidth connections between edge workers and the parameter server compared to the high-bandwidth networks used in cloud data centers (\eg, 100 Gbps Ethernet or InfiniBand). Typically, edge workers connect to the parameter server via heterogeneous networks. For instance, some workers may use 0.5 Gbps connections while others rely on 5 Gbps, leading to varying transmission costs for the embeddings of the same size~\cite{park2020mobile,li2020optimized,um2024metis}.   
Therefore, beyond accounting for the number of miss pull, update push, and evict push operations, it is also necessary to consider the heterogneous network bandwidths between workers and the parameter server.

\stitle{Limited resources.} In production environments, each iteration typically involves an input batch of thousands of embedding samples, and dispatching these samples to different workers introduces decision-making overhead. In edge environments, where multiple tasks compete for limited resources~\cite{ling2021rt,kong2023accumo}, this overhead must be carefully managed. Furthermore, in the online training scenario considered in this paper, which aims to promptly and accurately capture user interest drift and emerging trends~\cite{zhao2023recd,zhag2022picasso}, dispatch decisions must be completed within each training iteration to avoid performance degradation. Therefore, when designing the dispatch mechanism, it is crucial to account for both the time and resource consumption of dispatch decision-making in light of the challenges posed by edge environments and online training requirements.

Prior works focus on mitigating embedding transmission cost when training DLRM in cloud environments and overlook the consideration for edge-specific challenges~\cite{song2023ugache,miao2021het,adnan2021accelerating,zeng2024accelerating,ma2023fec}. To address the above practical challenges, we propose \our, a mechanism for dispatching embedding samples to edge workers to reduce the total embedding transmission cost and accelerate DLRM training. \our assigns an expected transmission cost for training each sample on different edge workers and dispatches  samples to  appropriate workers to reduce the total embedding transmission cost. To avoid decision time exceeding training time while balancing decision quality and resource consumption, we propose \ourmix as the dispatch decision method in \our,  which combines a resource-intensive optimal algorithm with a resource-efficient heuristic algorithm. Our technical contributions are summarized as follows:

\begin{itemize}[leftmargin=*]
    \item In this paper, we formulate the problem of dispatching  embedding samples to minimize the total embedding transmission cost during DLRM training in edge environments, \tann{taking into account cost composition, heterogeneous networks and limited resources.} To the best of our knowledge, this is the first work specifically focused on minimizing embedding transmission cost in edge environments (Sec.~\ref{sec:formulation}).
    \item We introduce \our, an embedding sample dispatch mechanism that reduces the total embedding transmission cost by calculating the expected costs for dispatching each embedding sample to different workers.  We propose \ourmix, a hybrid dispatch decision method within \our that balances decision quality and resource consumption to address the limited resources challenge (Sec.~\ref{sec:design} ).
    \item We implement \our using C++ (including CUDA) and Python to evaluate it through extensive experiments on a testbed with $8$ edge workers under real-world workloads. Evaluation results show that compared with LAIA~\cite{zeng2024accelerating}, \our reduces embedding transmissions by up to $36.76\%$ and delivers up to $1.74\times$ performance improvement for end-to-end DLRM training (Sec.~\ref{sec:imple} and Sec.~\ref{sec:eva}).
\end{itemize}

\section{Background}\label{sec:back}

\subsection{Recommender System for Mobile and  Ubiquitous Applications  in Edge Environments}

Recommender systems are now essential in e-commerce, social media, and media applications, offering personalized recommendations that enhance user experience and drive business growth~\cite{wang2024autosr,li2022towards}. Traditional recommender systems are cloud-based, where the cloud server trains the model using all user data and pushes recommendation results to users' devices upon request. Though this paradigm  benefits from the infinite computing power, it faces  challenges, including high resource  consumption~\cite{wang2020next}, dependence on network access for timeliness~\cite{long2023decentralized}, and user privacy concerns~\cite{muhammad2020fedfast,sun2024survey,zhang2024gpfedrec,ding2024fedloca}, which undermine the sustainability and trustworthiness of cloud-based systems~\cite{belal2022pepper}. Furthermore,  cloud-based systems often fail to meet the time-sensitive and computational demands of mobile and  ubiquitous applications.  With the rise of edge computing, there is a  trend towards training and deploying recommendation models in edge environments~\cite{dac2023,gong2020edgerec,himeur2022latest}.  This paradigm provides recommendation services while ensuring privacy protection and low latency for mobile and  ubiquitous applications such as the Internet of Energy~\cite{sayed2021intelligent,himeur2021survey,xia2023reca}, health monitoring~\cite{pourpanah2024exploring,gao2020monitoring}, and tourism exhibitions~\cite{su2019edge}.  For instance, in the Internet of Energy, recommender systems can suggest energy management solutions and assist in resource allocation. In health monitoring, they provide personalized health advice based on medical history and real-time health information. 

For data privacy, this paper achieves privacy by training DLRM on edge workers within a local area network (LAN) isolated from the external internet. This approach keeps private data local, preventing leakage while enabling low-latency recommendation services. Moreover, this paper focuses on online DLRM training. At the system's initialization stage, the DLRM model is trained on historical data generated by mobile  and  ubiquitous devices (such as cameras and sensors) on  edge workers. To ensure the continuous delivery of high-quality recommendations during the recommendation service, the DLRM model is continuously trained and refined online using streaming data from mobile devices~\cite{matam2024quickupdate, wang2024rap, yang2023incremental, sima2022ekko,yu2021secure,deng2024relayrec}. Overall, this paper focuses on online DLRM training using edge workers and parameter servers connected via a LAN to provide low-latency, privacy-preserving recommendation services for mobile and  ubiquitous applications.

\subsection{Architecture of Deep Learning Recommendation Model} 
Deep Learning Recommendation Models, such as WDL~\cite{cheng2016wide}, DFM~\cite{guo2017deepfm},  DCN \cite{wang2017deep} and FLEN~\cite{chen2019flen}, model the recommendation decision as a problem to predict the probability of a specific event, \eg, the likelihood of a web viewer watching the recommended content (video or music).  The general architecture of DLRM mainly consists of an embedding layer, feature interaction layer, and multi-layer perceptron layer (MLP), as shown in Fig.~\ref{fig:dlrmar}. The input of DLRM generally includes both dense and sparse inputs. Dense inputs represent continuous variables, such as age and height, while sparse inputs represent categorical variables, such as user ID and music ID. These are processed by the MLP layer and embedding layer, respectively.

\begin{figure}[hbt]
    \centering
    \includegraphics[width=0.66\columnwidth]{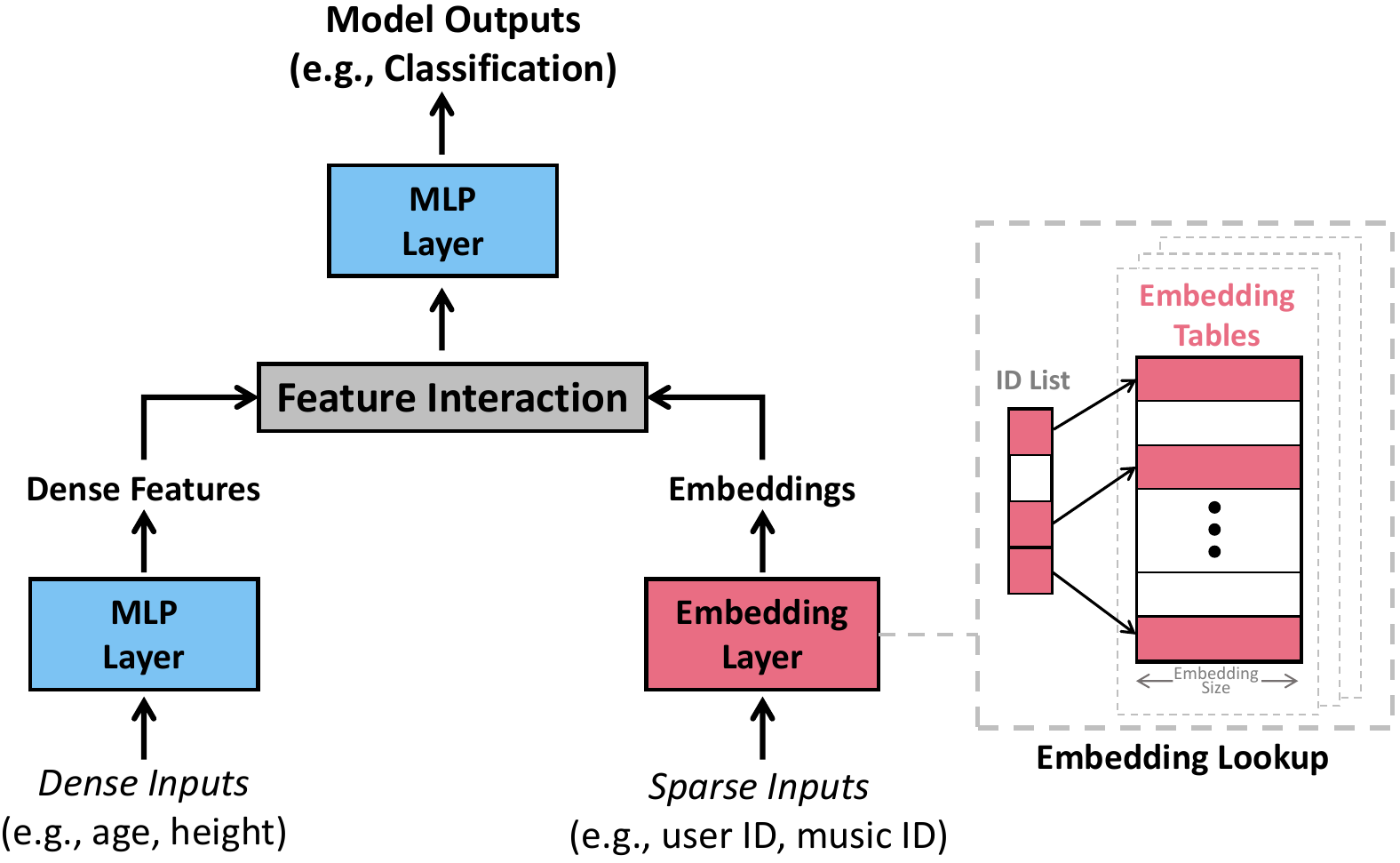}
    \caption{General Architecture of Deep Learning Recommendation Model.}
    \label{fig:dlrmar}
\end{figure}
\stitle{Embedding layer.} The embedding layer is practically the embedding tables, where each embedding table represents a categorical feature and each row of this table represents a specific ID (\eg, user ID or music ID). The embedding table transforms each ID into a fixed size vector with float values that are trainable, \ie, embedding.

\stitle{MLP layer.}  The multi-layer perceptron (MLP) layer performs multiple fully connected operations to process dense inputs and handle outputs from the feature interaction layer. It provides a probability of click-through rate (CTR) for a single user-item pair and includes other computation-intensive units, such as batch normalization.

\stitle{Feature interaction layer.} Feature interaction is the process of combining dense and sparse features to capture their complex and nonlinear relationships~\cite{zhag2022picasso}. The outputs of feature interaction will be processed by the subsequent MLP layer. Operations of feature interaction include averaging, sum, concatenation, \etc.

\subsection{Embedding Transmission Cost in DLRM Training} 

In DLRM training,  different parallelization and communication approaches can be implemented for the embedding and MLP layers. The MLP layers size is small enough to be accommodated by each worker, thus they are replicated across the workers to leverage data parallelism and use AllReduce for communication. On the other hand, embedding layers contribute to $>99\%$ of the entire model size and can be in the order of several terabytes. Because storing all the embedding tables in a single worker is not feasible,  the PS architecture maintains globally shared embeddings in memory-optimized parameter servers, and edge workers cache part of  embedding tables in their local memory, \ie, using model parallelism and PS communication~\cite{adnan2024heterogeneous}. However, this design introduces significant embedding transmission cost when applying  bulk synchronous parallel (BSP) training for non-degraded and reproducible model accuracy~\cite{rong2020distributed}.  BSP training is a parallel computing model that organizes computations into synchronized steps, where each step consists of local computation, communication among workers, and a global synchronization barrier. This ensures all workers stay coordinated and progress together~\cite{ghadimi2016mini}. In each training iteration, workers pull embeddings from the parameter server and push embedding gradients back to the parameter server on demand. In our experiment using the default setting as shown in Sec.~\ref{sec:eva}, the embedding transmission cost dominates the training cycle, which can account for up to $90\%$ of the end-to-end training time. Prior works mitigate embedding transmission cost by reducing the number of embedding transmissions between workers and parameter server~\cite{agarwal2023bagpipe,miao2021het,adnan2021accelerating,zeng2024accelerating} or reducing each embedding transmission cost between devices~\cite{song2023ugache,xie2023petps,wei2022gpu,kwon2022training}. In this paper, we aim to minimize the total embedding costs during DLRM training in edge environments when facing challenges such as cost composition, heterogeneous networks, and limited resources.
This work is similar to the former and orthogonal to the latter.

\section{System Model and Problem Formulation}\label{sec:formulation}

We provide the system model and problem formulation in this section. Commonly used  symbols are listed in Table~\ref{table:notation}.

\stitle{\textit{System.}} Motivated by training DLRM on multiple edge workers, this work focuses on a system comprising multiple edge GPU workers (workers) and one parameter server (PS). 
Specifically, the system consists of $n$  workers, $\mathcal{W}=\{w_1,w_2,\dots,w_{n}\} $. These workers are connected to the parameter server through Ethernet, typically $5$ Gbps or $0.5$ Gbps, and are also connected  among themselves.

\stitle{\textit{Input  Training Samples.}} In DLRM training, each worker is associated with a data loader, which is a process on the CPU and is responsible for preprocessing the training samples. The data loaders can load training samples from  local memory, disk, or network streams. For each training iteration $I_i$,  input generation is usually performed by the data loader, which partitions a batch of inputs into multiple micro-batches for every worker. In this paper, in order to maintain a balanced workload distribution across the workers, we use a fixed $m$ to present the batch size for each worker; thus, for $n$ workers, the total number of input training samples per iteration is $m \times n$.
For the  $m \times n$  input embedding samples $\mathcal{E}$, $\forall E_i \in \mathcal{E}$ is an embedding sample, consisting of multiple embedding IDs, $E_i = \{x_1, x_2,x_3, \dots\}$. The ID type feature is the sparse encoding of large-scale categorical information. We use $Emb(x_i)$  to represent the embedding value  for $x_i$ .

\stitle{\textit{Embedding Cache.}}
When training DLRM  on edge workers, each worker holds a replica of the MLP  and uses AllReduce for gradient synchronization during training. The entire embeddings are stored in the global embedding table on the parameter server, while each worker caches a portion of the embedding tables. The workers manage the local cache by transmitting embeddings with the PS to control inconsistencies between the local and global embedding tables. We use $r$ to represent the cache ratio, \ie, the ratio of the number of in-cache embeddings to the total number of embeddings.


\begin{table}[htbp]
\centering
\caption{List of Symbols}
\label{table:notation}
\begin{tabular}{>{\centering\arraybackslash}p{1cm} p{6cm}}
\toprule
\multicolumn{1}{c}{\textbf{Notation}} & \multicolumn{1}{c}{\textbf{Description}} \\
\midrule
$\mathcal{W}$  & The set of edge servers, $\mathcal{W}=\{w_1,w_2,\dots,w_{n}\}$ \\
$\mathcal{E}_i$  & The input embedding samples for iteration $I_i$, $\mathcal{E}_i=\{E_1,E_2,\dots,E_{m \times n}\}$ \\
$I_i$ & The $i_{th}$ training iteration \\
$m$ & Batch size for each worker, \ie, batch size per worker \\
$E_i$ & An embedding sample, $E_i=\{x_1,x_2,x_3, \dots\}$ \\
$x_i$ & The identifier (ID) for an embedding \\
$Emb(x_i)$ & The embedding vector corresponding to ID $x_i$\\
$ D_{tran}$ & The data size of an embedding vector\\
$B_w^j$ & The network bandwidth between worker $w_j$ and the parameter server\\
$T_{tran}^j $ & The cost for transmitting an embedding between worker $w_j$ and the parameter server, $T_{tran}^j = \frac{ D_{tran}}{B_w^j}$ \\

\bottomrule
\end{tabular}
\end{table}
\stitle{\textit{On-demand Synchronization Between Workers and the PS.}}
In BSP training for DLRM, each iteration requires the latest version of embeddings. To maintain consistency, workers transmit embedding values and gradients with the PS \footnote{In this paper, we refer to both embedding value and gradient transmission as embedding transmission, which includes embedding value pull and embedding gradient push.}.  On-demand synchronization is one method to reduce transmission overhead. Specifically, after iteration $I_t$, instead of pushing all embedding gradients to the PS, the on-demand push is used in $I_{t+1}$. For example, if embedding $Emb(x_i)$ is trained on worker $w_j$ during iteration $I_t$ and no other worker ($w' \in \mathcal{W}, w' \neq w_j$) receives samples containing $x_i$ in $I_{t+1}$, then worker $w_j$ does not need to push the gradient of $Emb(x_i)$ to the PS. Conversely, if other workers require $Emb(x_i)$ in $I_{t+1}$, worker $w_j$ needs to push $Emb(x_i)$ to the PS, and the other workers  pull $Emb(x_i)$ into their caches. Thus, when using on-demand synchronization, the process for each iteration on each worker after receiving training samples from the data loader is as follows:  on-demand embedding gradient pushing, pulling latest embeddings not in the local cache, model forward propagation, backward propagation, and dense parameter AllReduce synchronization. 


\stitle{\textit{Model Consistency Analysis.}}
In this part, we demonstrate that using on-demand synchronization in BSP does not compromise model accuracy.  Accelerating DLRM training under BSP with embedding dispatching can preserve model consistency and thus the exactly same model accuracy. Compared to vanilla distributed training, embedding dispatching introduces two main changes: a specific input embedding samples dispatch mechanism (versus a random dispatch mechanism) and on-demand synchronizations (versus full-set synchronizations). In BSP, on-demand synchronizations ensure that all workers use the latest embedding for the incoming training iteration, which is the same behavior as the vanilla distributed training performs.
We prove that the dispatch mechanism under BSP will not affect the training model. Considering a parameter optimizer followed by the SGD algorithm~\cite{goyal2017accurate}, the gradient calculation for model weights $y$ on a given batch of $m \times n $ training samples is expressed as follows:

\begin{equation}\label{eq:sgd}
\nabla_{y} = \frac{1}{m \times n } \sum_{i=1}^{m \times n } \frac{\partial L(\mathcal{E}_i, y)}{\partial y}
\end{equation}

where \( \mathcal{E}_i \) represents the \( i \)-th training sample of the batch, and \( L \) denotes the loss function. According to Eq.~\ref{eq:sgd}, the gradient for the batch is calculated as the sum of the individual gradients for each training sample. Since the individual gradient depends only on input samples and current model weights, which have been synchronized before this iteration begins under BSP, partitioning the batch into \( n \) micro-batches does not affect the gradient result:

\begin{equation}
\frac{1}{m \times n } \sum_{i=1}^{m \times n } \frac{\partial L(\mathcal{E}_i, y)}{\partial y} = \frac{1}{m \times n } \sum_{i=1}^{n} \sum_{j=1}^{m} \frac{\partial L(E_{ij}, y)}{\partial y}
\end{equation}

where \( E_{ij} \) is the \( j \)-th training sample in the \( i \)-th micro-batch (for worker \( w_i \)). Therefore, any dispatch result generated by any dispatching mechanism preserves the same gradients as in BSP throughout training and converges to the same model.
\subsection{System Model}\label{sec:systemmodel}
\begin{figure*}[t]
    \centering
    \includegraphics[width=1.99\columnwidth]{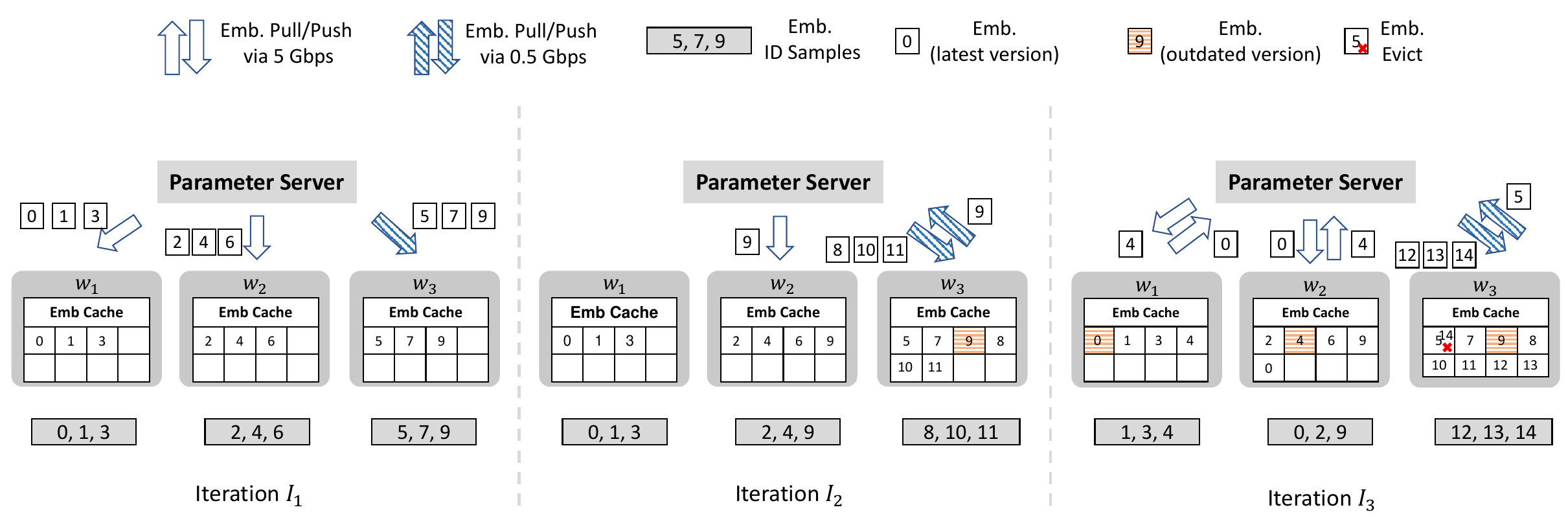}
    \caption{Miss Pull, Update Push, and Evict Push transmission operations in DLRM training.}
    \label{fig:example}
\end{figure*}

\subsection{Problem Formulation}
In this work, our goal is to minimize the total embedding transmission cost when training DLRM in edge environments. As shown in Fig.~\ref{fig:example}, in DLRM training, when using on-demand synchronization, the cost for embedding transmission comes from three embedding transmission operations, \ie, \textit{Miss Pull}, \textit{Update Push}, and \textit{Evict Push}. In Fig.~\ref{fig:example}, for simplicity, numbers are used to represent embedding IDs and vectors.

\begin{itemize}[leftmargin=*]
    \item \textit{Miss Pull}: When receiving the training samples from the data loader, the worker should retrieve the necessary embeddings from the local cache or the PS, a process known as embedding lookup. If the required embedding  is available with the latest version, there is no need to pull the embedding from the PS. In the context of cache, this is a cache hit. Conversely, if the required embedding is absent  or  outdated, it must be pulled from the PS, resulting in a cache miss and incurring additional transmission cost (Miss Pull), as shown for embeddings $Emb(x_8)$, $Emb(x_9)$, $Emb(x_{10})$, and $Emb(x_{11})$ in iteration $I_2$ in Fig.~\ref{fig:example}.

    \item \textit{Update Push}: When using on-demand synchronization, the gradient push occurs at the start of each iteration rather than at its end. Upon receiving the training samples, if some embeddings were trained on other workers in the previous iteration, the other worker should Update Push the embedding gradient to the parameter server. For example, in iteration $I_2$ in Fig.~\ref{fig:example}, worker $w_3$ should push the embedding gradient of $x_{9}$ to the PS for training on worker $w_2$.

    \item \textit{Evict Push}: In addition to the Miss Pull caused by cache misses and the Update Push introduced by on-demand synchronization, if the worker's cache is full, it needs to evict embeddings to make room for the current training iteration. During eviction, if unsynchronized embeddings are evicted, their gradients must be updated to the parameter server. This process is referred to as Evict Push (\eg, the eviction of $Emb(x_{5})$ in iteration $I_3$ in Fig.~\ref{fig:example}).
\end{itemize}

The goal of this paper is to minimize the total embedding transmission cost across all iterations. We assume that the data size for each embedding pull and push is the same, denoted as $D_{tran}$. For iteration $I_t$, the number of embedding transmissions is $T_{num}^t$. Given a network bandwidth of $B_w$, the cost for each transmission operation is $T_{tran} = \frac{D_{tran}}{B_w}$.
\begin{align} 
\intertext{Problem P:}
    \begin{aligned}
    \text{minimize }        &\sum_{\text{for all iterations}} T_{num}^t \times T_{tran} \label{problem1}\\
    \end{aligned}
\end{align}


\section{Mechanism Design}\label{sec:design}

In this section, we first demonstrate the process of input embedding samples dispatching in \our. Second, we introduce how to calculate the expected transmission cost in Sec.~\ref{sec:calculate}. Then, given the limited resources in edge environments, we propose \ourmix, a hybrid dispatch decision method that balances solution quality and resource consumption in Sec.~\ref{sec:hybrid}.

\subsection{Process of  embedding samples dispatching  in \our}

Fig.~\ref{fig:disoverview} presents an overview of the embedding samples dispatching process in \our.  Specifically, at the start of  training iteration $I_t$, \our makes dispatch decisions for iteration $I_{t+1}$ based on the input embedding samples for iteration $I_{t+1}$ and the current states (cached or not, the latest version or not) of the cached embeddings on each worker. Additionally, based on the dispatch decision and the current states of the cached embeddings on each worker, \our can generate the update pushing plan for each worker for iteration $I_{t+1}$, which is omitted in Fig.~\ref{fig:disoverview}.  Due to this paper focusing on online DLRM training, the dispatch decision time for iteration $I_{t+1}$ should be hidden within the training time of $I_t$ to ensure that embedding sample dispatching can reduce the training time. The above dispatch process is executed locally on each worker, rather than by a centralized orchestrator, to avoid the overhead of distributing dispatching decisions.

\begin{figure}[h]
    \centering
    \includegraphics[width=0.85\columnwidth]{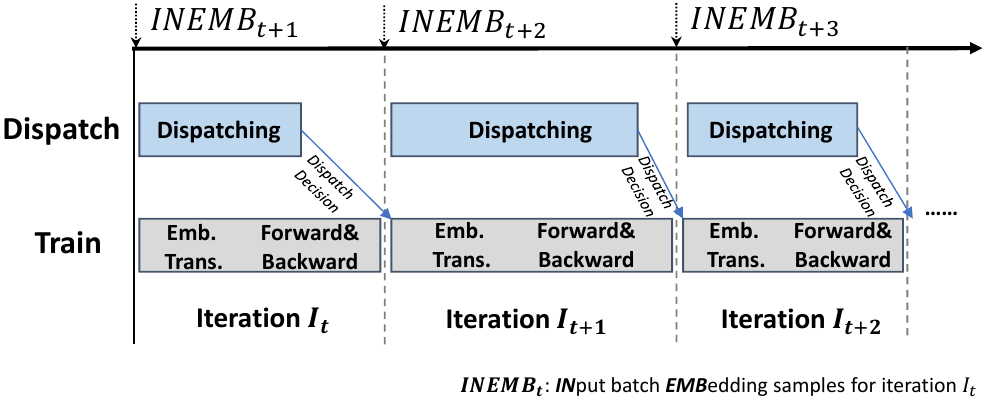}
    \caption{Overview of  embedding samples dispatching process in \our.}
    \label{fig:disoverview}
\end{figure}


To reduce the embedding transmission cost, for input embedding samples $\mathcal{E}_i$, we assign an expected transmission cost $c_i^j$ for each embedding sample $E_i$ ($E_i \in \mathcal{E}_i$) training on  worker $w_j$, and dispatch embedding samples to the workers to minimize the total embedding transmission cost.
When dispatching input embedding samples based on the expected transmission cost, decision-making consumes the resources of edge workers, which are limited and subject to competition from multiple tasks. \our incorporates a hybrid dispatch decision method \ourmix, which combines a resource-intensive optimal algorithm (\opt)  and a  heuristic algorithm (\heu) to balance solution quality (\ie, the total embedding transmission cost) and resource consumption. For example, \ourerwu indicates that $25\%$ of the input embedding samples use \opt for dispatch decisions, while the remaining $75\%$ use \heu. We analyze the dispatch error when using \heu, and partition  input embedding samples to different dispatch decision methods by $\min_{\text{2}}$ - $\min$, where $\min_{\text{2}}$ and $\min$ represent the second minimum and minimum costs, respectively, for training embedding sample $E_i$ on different workers.

\subsection{Expected Transmission Cost}\label{sec:calculate}
This subsection  demonstrates the method to calculate the expected transmission cost.

Since the  embedding states  of the embedding cache  on each worker are visible, and today's data loaders can prefetch  embedding samples for the next iteration, as shown from Line~\ref{costbegin} to Line~\ref{costend} in Alg.~\ref{alg:dispatch}, we calculate the expected transmission cost of dispatching each embedding sample to each worker. We introduce the calculation method through an example. Considering the start of iteration $I_t$, we begin calculating the expected transmission costs for  embedding samples for the next iteration, \ie, $I_{t+1}$. If the input embedding sample $E_i = \{x_{1}, x_{2}, x_{3}, \dots\}$ is dispatched to worker $w_j$ for training, for embedding value $Emb(x_i)$ of $x_{i}$, if the latest version of $Emb(x_i)$ is not in $w_j$'s cache, a Miss Pull transmission operation is required, with a cost of $T_{tran}^j$ (Line~\ref{costpull}). Additionally, if the latest version of embedding $Emb(x_i)$ exists on other workers during the current iteration $I_t$, under on-demand synchronization, the other worker ($w_{j'}$) needs to push $Emb(x_i)$ to the parameter server, thus $c_i^j += T_{tran}^{j'}$, \ie, the cost of worker $w_{j'}$ pushing the embedding to the parameter server is added (Line~\ref{costpush}). This way, for iteration $I_{t + 1}$, we obtain the expected transmission cost of dispatching each embedding sample to different workers and store the expected transmission cost in the cost matrix $\mathcal{C}$. For Line~\ref{pullin} and Line~\ref{pushin} in Alg.~\ref{alg:dispatch}, \textit{in} indicates that the latest version of \( Emb(x_i) \) is available in the embedding cache of the edge worker, while \textit{not in} means it is not.

Moreover, training DLRM in edge environments  faces heterogeneous network bandwidth between edge workers and the PS, \ie, different workers have varying network bandwidths ($B_w^j$), leading to different $T_{tran}^j$.  For instance, for iteration $I_2$  in Fig.~\ref{fig:example}, the bandwidth between $w_2$ and the PS is $5$ Gbps, whereas between $w_3$ and the PS, it is only $0.5$ Gbps. As a result, the cost to pull $Emb(x_8)$ from the PS to $w_3$ is ten times  the cost to pull $Emb(x_9)$ to $w_2$, \ie, $T_{\text{tran}}^{3} = 10 \times T_{\text{tran}}^{2}$.

\begin{algorithm}[h]
    \caption{\our}  
    \label{alg:dispatch}
    \KwIn{Input  embedding samples $\mathcal{E}$ for one iteration, $| \mathcal{E}| = m \times  n$, workers $\mathcal{W}$, cost matrix $\mathcal{C}$.}

    \KwOut{Dispatch decision (\textit{Decision}).}
    Initialize \textit{Decision};\ 
    
   Initialize all elements in $\mathcal{C}$ with $0$;\
   
    \For{$E_i$ in $\mathcal{E}$}{ \label{costbegin}
     \For{$w_j$ in $\mathcal{W}$}{
     \For {$x_{i}$ in $E_i$}{
        \If{$Emb(x_i)$ for $x_{i}$ not in $w_j$\label{pullin}} {
               $c_i^j += T_{\text{tran}}^j$;\ \label{costpull}
               
          \If{$Emb(x_i)$ for $x_i$ in $w_{j'}$}{ \label{pushin}
           $c_i^j += T_{\text{tran}}^{j'}$;\
           \label{costpush}
          }
        }}
     }
    } \label{costend}
    \textit{Decision} = \ourmix($\mathcal{C}$)\;
\Return $\textit{Decision}$.
\end{algorithm}

\subsection{\ourmix}\label{sec:hybrid}

After calculating the expected transmission cost for training each embedding sample on each worker and obtaining the cost matrix $\mathcal{C}$, this section focuses on designing a  decision method to determine the  dispatch decision that minimizes the total embedding transmission cost. 

In resource-limited edge environments, ensuring the time required to derive a dispatch decision remains shorter than the training time per iteration poses a  challenge.  We first identify the limitation of sequentially executing the Hungarian algorithm on  CPU, which results in excessive solution time. We then briefly analyze the feasibility of parallelizing the Hungarian algorithm using CUDA programming (\opt). To balance solution quality (\ie, the total embedding transmission cost) and resource consumption, we introduce a resource-efficient heuristic method \heu. While \opt delivers optimal dispatch decisions with high resource consumption, we propose a hybrid dispatch method \ourmix that combines  \opt and \heu. In \ourmix, the dispatch error of \heu is a partitioning criterion for assigning portions of the cost matrix to \opt and \heu.

The problem of finding the optimal dispatch decision of cost matrix $\mathcal{C}$ that minimizes the total embedding transmission cost can be solved as an assignment problem using the Hungarian algorithm~\cite{kuhn1955hungarian}. When the batch size per worker is $m$, and the number of workers is $n$, the cost matrix $\mathcal{C}$ has dimensions of $(m \times n) \times n$. To accommodate the constraint of $m$ batches per worker, we expand each column of $\mathcal{C}$ to $m$ columns, resulting in a square matrix $\mathcal{C^{'}}$ with order $k$ ($k = m \times n$), which serves as input to the Hungarian algorithm. The Hungarian algorithm has a time complexity of $\mathcal{O}(k^{3})$~\cite{lawler2001combinatorial}.  As demonstrated in Table~\ref{tab:execution_times}, sequential execution on CPU shows that when the batch size per worker increases from $32$ to $1024$, the execution time escalates from $9$ milliseconds to $134.986$ seconds. This significant latency exceeds the training time per iteration, violating our requirement that the dispatch decision for iteration $I_{t+1}$ should be computed within the training time of iteration $I_t$ to avoid increasing the end-to-end training time. Our analysis reveals that the Hungarian algorithm contains several inherently parallel components, including subtracting minimum values from each row or column in initial reduction, zero-element covering in initial matching, and matrix updates during matrix adjustment~\cite{lopes2019fast}. Therefore, we implement a CUDA-based parallel version of the Hungarian algorithm to ensure the dispatch decision latency remains within the training time constraints. The effectiveness of this parallelization is evidenced by the execution times shown in Table~\ref{tab:execution_times}. For a comprehensive description of the Hungarian algorithm, please refer to~\cite{lawler2001combinatorial,munkres1957algorithms}.

\begin{table}[h]
    \centering
      \caption{Execution time (millisecond) comparison between serial and parallel implementations of Hungarian algorithm for different batch size  per worker (BPW) when using $8$ workers.}
    \begin{tabular}{ccccccc}
        \toprule
         \textbf{Batch Size  Per Worker}& \textbf{32} & \textbf{64} & \textbf{128} & \textbf{256} & \textbf{512} & \textbf{1024} \\ 
        \midrule
        Serial & 9 & 62 & 528 & 3360 & 50976 & 134986 \\ 
        Parallel & 21 & 28 & 82 & 186 & 811 & 1385 \\ 
        \bottomrule
    \end{tabular}
  
    \label{tab:execution_times}
\end{table}

\begin{algorithm}[hbt]
    \caption{\ourmix}  
    \label{alg:hybrid}
    \KwIn{Input embedding samples $\mathcal{E}$, $| \mathcal{E}| = m \times n$, workers $\mathcal{W}$,  \textit{maxworkload} = $m$, cost matrix $\mathcal{C}$,  $\alpha$, workload list $workload[1 \ldots n]$.}
    \KwOut{$\mathcal{D}$. \tcp*[h]{Dispatch Decision}\\}
    
    Initialize  $workload[1 \ldots n]$ with zeros;\
    
    Calculate  $\min_{\text{2}}$ - $\min$ of each row  in $\mathcal{C}${\label{min2start}} ;\
    
    Sort rows   in $\mathcal{C}$ by $\min_{2}$ - $\min$  in descending order;\

    \( \mathcal{C}_{\text{heu}} \gets \) rows of \( \mathcal{C} \) from \( \lceil | \mathcal{E}| \times \alpha \rceil  \) to \( | \mathcal{E}| \)\;\tcp*[h]{Preserve initial row numbering}

    \(\mathcal{C}_{\text{opt}} \gets \) rows of   \(\mathcal{C}\) from $0$  to \( \lfloor | \mathcal{E}| \times \alpha \rfloor \) {\label{min2end}};\
   
    Expand each column to \( \lfloor \text{\textit{maxworkload}} \times \alpha\rfloor \) columns of $\mathcal{C}_{\text{opt}}$;\

   $\mathcal{D}$ = \opt($\mathcal{C}_{\text{opt}}$);\
    
    \textit{maxworkload} =  $\text{\textit{maxworkload}} - \lfloor \text{\textit{maxworkload}} \times \alpha\rfloor $;\

    \For{ $row_i$  in $\mathcal{C}_{\text{heu}}$ }{\label{heustart} 
\While {$\mathrm{True}$}{
 Select the minimum value $c_i^j$ in  $row_i$ ;

 \If{$workload[j]$ $<$ \text{\textit{maxworkload}}}{

Dispatch $E_i$ to worker $w_j$;\

Add dispatch decision $E_i$ to $w_j$ into    $\mathcal{D}$;\

 $workload[j]$ += 1;\
 
 break;\
 
 }

 \Else{

  Exclude $c_i^j$ from the  selection in  $row_i$;\
 }
 }

    }\label{heuend}

\Return   $\mathcal{D}$.
\end{algorithm}

In edge environments, although the Hungarian algorithm can provide the optimal dispatch decision to minimize the total embedding transmission cost for each iteration,  implementing the Hungarian algorithm using CUDA consumes inherently limited and competing GPU resources. To balance the dispatch decision quality and the resource consumption, as shown in Alg.~\ref{alg:hybrid}, we propose \ourmix, a hybrid method to dispatch the input embedding samples to edge workers, which combines the Hungarian algorithm (\opt) and the heuristic algorithm (\heu) to  make dispatch decision.

 Line~\ref{heustart} to Line~\ref{heuend} in Alg.~\ref{alg:hybrid} show the heuristic method \heu. For dispatching  input embedding samples to  workers, since we have  the cost matrix $\mathcal{C}$, where $row_i$ represents $n$ expected embedding cost for dispatching $E_i$  to $n$ workers, \heu greedily dispatches $E_i$ to the worker with the lowest expected cost unless the worker reaches its maximum workload. For example, if $c_i^j$ is the minimum value in $row_i$ of $\mathcal{C}$, then $E_i$ is dispatched to $w_j$. In this paper, for each iteration, to avoid imbalance among different workers, each worker processes  $m$ embedding samples, \ie, $m$ is the   \textit{maxworkload} of each worker. While this approach is not optimal, according to Theorem~\ref{the:error}, the worst-case dispatch error is $\min_{\lfloor i/m \rfloor +1 }-\min$ for $row_i$, $\min_{\lfloor i/m \rfloor +1}$ and $\min$ represent the $\lfloor i/m \rfloor +1$-th minimum and the minimum values in $row_i$, respectively.

To balance decision quality and computational efficiency, \ourmix employs the $\min_{\text{2}}-\min$ metric as the partitioning criterion for each row. Specifically, we assign a fraction $\alpha$ ($0 \le \alpha \le 1$) of rows with the highest $\min_{\text{2}}-\min$ values to \opt, while the remaining rows are processed by \heu (Line~\ref{min2start} to Line~\ref{min2end}). This ensures that embedding samples with larger potential dispatch errors are handled by the optimal solver. It is worth noting that the partitioning criterion is flexible and can be adapted based on specific requirements. Alternative metrics such as $\min_{\text{3}}-\min$, $\min_{\text{3}}-\min_{\text{2}}$, or row-wise averages can be employed, depending on the data distribution patterns observed in practical DLRM training scenarios.


\begin{theorem}\label{the:error}
When using \heu, for $row_i$, the worst-case dispatch error is $\min_{\lfloor i/m \rfloor+1}-\min$. $\min_{\lfloor i/m \rfloor+1}$ and $\min$ represent the $\lfloor i/m \rfloor+1$-th minimum and the minimum values  in $row_i$, respectively.
\end{theorem}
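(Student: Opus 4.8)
The plan is to bound directly, for each row index $i$, the excess cost that \heu can incur relative to the ideal choice of the minimum-cost worker. I define the \emph{dispatch error} for $row_i$ as the entry \heu actually pays for $E_i$ minus $\min$ (the smallest entry of $row_i$), and I will show this quantity never exceeds $\min_{\lfloor i/m\rfloor+1}-\min$. The heuristic processes the rows in order and, for each $E_i$, repeatedly selects the smallest still-available entry, skipping any worker whose workload has reached \textit{maxworkload} $=m$. Hence the only reason \heu pays more than $\min$ is that some of the cheapest workers for $E_i$ are already saturated, so the crux is to count how many workers can be full at the instant $E_i$ is processed.

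First I would establish the counting bound. Using the $0$-based indexing of the algorithm, at the moment $E_i$ is handled exactly $i$ samples (namely $E_0,\dots,E_{i-1}$) have already been dispatched. Since a worker becomes full only after receiving $m$ samples, if $k$ workers were full they would jointly account for $km$ dispatched samples, forcing $km\le i$ and therefore $k\le\lfloor i/m\rfloor$. Thus at most $\lfloor i/m\rfloor$ workers can be unavailable when $E_i$ is processed; moreover, because $i<mn$ the remaining total capacity is positive, so at least one worker is still available and the inner \textbf{while} loop is guaranteed to terminate.

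Next I would combine this count with a pigeonhole step. Consider the $\lfloor i/m\rfloor+1$ workers whose entries in $row_i$ are the $\lfloor i/m\rfloor+1$ smallest, i.e. those realizing $\min,\min_2,\dots,\min_{\lfloor i/m\rfloor+1}$. Since at most $\lfloor i/m\rfloor$ workers are full, at least one of these $\lfloor i/m\rfloor+1$ cheapest workers remains available, and by definition its cost is at most $\min_{\lfloor i/m\rfloor+1}$. Because \heu always takes the globally cheapest available entry, the cost it pays for $E_i$ is no larger than this value, giving a dispatch error of at most $\min_{\lfloor i/m\rfloor+1}-\min$. To argue tightness, I would exhibit an adversarial instance in which the $\lfloor i/m\rfloor$ strictly cheapest workers for $E_i$ are exactly those filled by $E_0,\dots,E_{i-1}$, forcing \heu onto the $(\lfloor i/m\rfloor+1)$-th cheapest worker and attaining the claimed error.

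I expect the counting bound to be the main obstacle, since it hinges on the precise relationship between the processing index and the capacity constraint. In particular one must be careful with the $0$-based versus $1$-based convention so that the floor $\lfloor i/m\rfloor$ (rather than $\lfloor(i-1)/m\rfloor$ or $\lceil i/m\rceil$) emerges, and must confirm that a worker is counted as full only once it has received its $m$-th sample. Once this combinatorial fact is pinned down, the pigeonhole conclusion follows immediately.
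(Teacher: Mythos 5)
Your proof is correct, and it hinges on the same underlying fact as the paper's proof---that at most $\lfloor i/m\rfloor$ workers can be saturated at the moment $E_i$ is processed---but you establish and deploy that fact by a genuinely different decomposition. The paper argues block by block: it walks through rows $0\le i\le m-1$ (error $0$), then $m\le i\le 2m-1$, then $2m\le i\le 3m-1$, in each case \emph{constructing} the adversarial scenario in which every preceding block of $m$ rows is funneled to a single worker that is also among the cheapest for the current row, and then extrapolates (``by this logic'') to general $i$. That narrative exhibits the tight instance cleanly, but the claim that \emph{no} dispatch history can do worse is left implicit in the scenario construction. Your version closes exactly that gap: the counting bound $km\le i$, hence $k\le\lfloor i/m\rfloor$ full workers, holds for \emph{every} prior dispatch pattern (not just the block-aligned adversarial one), and the pigeonhole step over the $\lfloor i/m\rfloor+1$ cheapest workers then gives the upper bound uniformly in $i$; your separate adversarial instance for tightness coincides with the paper's construction. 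You also supply two details the paper omits: termination of the inner \textbf{while} loop (residual capacity is positive since $i<mn$) and, implicitly, that $i<mn$ guarantees $\lfloor i/m\rfloor+1\le n$ so the order statistic $\min_{\lfloor i/m\rfloor+1}$ exists. In short, the paper's proof is chiefly an attainability argument with the upper bound asserted by pattern, while your counting-plus-pigeonhole structure proves the worst-case upper bound rigorously and then matches it, making yours the more complete rendering of the same core idea.
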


\begin{proof}
For the cost matrix $\mathcal{C}$, with $m \times n$ rows and $n$ columns, each column can be dispatched up to $m$ times. For $row_i$ ($0 \le i \le m-1$), the worst-case error is 0, as no column will reach  \textit{maxworkload}. For $row_i$ ($m \le i \le 2m-1$), the worst-case scenario is that all $row_i$ ($0 \le i \le m-1$) are dispatched to the same worker $w_{u^1}$, and in $row_i$ ($m \le i \le 2m-1$), the minimum values are all $c_i^{u^1}$. Since the column representing worker $w_{u^1}$ has already reached  \textit{maxworkload}, for $row_i$ ($m \le i \le 2m-1$), we can only choose $\min_{2}$. For $row_i$ ($2m \le i \le 3m-1$), similarly, if all $row_i$ ($0 \le i \le m-1$) are dispatched to the same worker $w_{u^1}$ and all $row_i$ ($m \le i \le 2m-1$) are dispatched to the same worker $w_{u^2}$, for $row_i$ ($2m \le i \le 3m-1$), their minimum values and $\min_{2}$ are both located in $c_i^{u^1}$ or $c_i^{u^2}$. But since $w_{u^1}$ and $w_{u^2}$ have both reached  \textit{maxworkload}, $row_i$ ($2m \le i \le 3m-1$) can only choose $\min_{3}$. By this logic, for $row_i$ (\ie, embedding sample $E_i$), the worst-case dispatch error is $\min_{\lfloor i/m \rfloor+1}-\min$. $\min_{\lfloor i/m \rfloor+1}$ and $\min$ represent the $\lfloor i/m \rfloor+1$-th minimum and the minimum values (costs) in $row_i$, respectively.

\end{proof}

\section{Implementation}\label{sec:imple}

We  implement a  prototype of \our on top of HET~\cite{miao2021het} with C++  and Python. Besides the design described in Sec.~\ref{sec:design}, \our maintains cache snapshots of all workers  to provide information on embedding locations. These cache snapshots are used to calculate expected transmission cost, and updated at the end of the dispatch based on the dispatch result. We replace the data loader implementation in HET, the data loader in \our  returns embedding samples and on-demand synchronized embedding indexes as sparse inputs. To support input prefetching without interfering with the training process, \our is launched by dedicated threads, and the dispatching decisions of \our are transmitted to the data loader via shared memory. Additionally, we use CUDA to parallelize the initial reduction, initial matching, and matrix adjustment steps of the Hungarian algorithm.

\section{Evaluation}\label{sec:eva}

We evaluate the performance of \our using typical workloads as listed in Table~\ref{tab:workloads}. Compared with LAIA, the state-of-the-art mechanism that reduces the training time by reducing the number of embedding transmissions, \our achieves up to a $1.74\times$ speedup and reduces embedding transmission cost by up to $36.76\%$. Through sensitivity analysis on the batch size per worker, cache ratio, and embedding size, \our consistently outperforms baselines. All experiments are conducted five times, and the average results are reported.

\subsection{Evaluation Settings}

\stitle{Testbed.}
Both workers and PS are equipped with a 28-core Intel Xeon Gold 6330 CPU at $2.0$ GHz and $64$ GB ($500$ GB in PS) of RAM. Each worker is  equipped with a Nvidia $4090$ GPU. These workers and the PS are connected via $5$ or $0.5$ Gbps Ethernet. All machines run Ubuntu 18.04, Python 3.8, CUDA 11.3, cuDNN 8.2.0, NCCL 2.8, and OpenMPI 4.0.3.

Unless mentioned otherwise, our experimental setup consists of $8$ edge workers and $1$ PS. Four workers are connected to the PS via $5$ Gbps Ethernet, and the other four via $0.5$ Gbps Ethernet.  The batch size per worker is $128$ and the embedding size is $512$. Each worker contains an embedding cache  that can house $8\%$ of the PS-side embedding tables by default.

\stitle{Baselines.}
We compare \our with the following mechanisms.

\begin{itemize}[leftmargin=*]
\item HET~\cite{miao2021het}: HET is a mechanism that enables embedding caching under the PS architecture by tracking embedding versions to tolerate bounded staleness. When retrieving an embedding, the local version is compared with the PS version, and if the difference exceeds a threshold, the embedding is pulled from the PS. This same method applies to gradient synchronization.
\item FAE~\cite{adnan2021accelerating}: FAE is designed with static caching. Cached embeddings are profiled and fixed offline before training. All workers in FAE cache  the same embeddings, and they synchronize all cached embeddings using AllReduce.
\item LAIA~\cite{zeng2024accelerating}: LAIA is designed for scheduling embeddings among multiple cloud workers. LAIA calculates a score to quantify the relevance between every input sample and worker and allocates each input to the worker with the highest score.
\end{itemize}

 HET and FAE reduce the embedding transmissions by compromising model accuracy and are orthogonal to \our, and we adopt BSP training in HET and FAE.  In this section, \our is used to represent the mechanism regardless of the value of $\alpha$. \ouryi, \ourwu, \ourerwu, \ouryierwu, and \ourling represent the  mechanisms with different $\alpha$.

\stitle{Workloads.} We use three real-world models with representative datasets for end-to-end experiments, as listed in Table~\ref{tab:workloads}. We exclude the first $10$ iterations for warm-up and report the performance for the subsequent iterations. We do not include training accuracy results because \our can preserve a consistent model accuracy, as analyzed in Sec.~\ref{sec:systemmodel}. 

\begin{table}[hbt]
  \caption{Workloads for evaluation.}
    \centering
     \small 
    \begin{tabular}{ccc}
        \toprule
     \textbf{Workload}&   \textbf{Model} & \textbf{Dataset} \\ \midrule
        S1 & WDL \cite{cheng2016wide} & Criteo Kaggle \cite{criteodata} \\ 
        S2 & DFM \cite{guo2017deepfm} & Avazu \cite{avazudata} \\ 
        S3 & DCN \cite{wang2017deep} & Criteo Sponsored Search \cite{tallis2018reacting} \\ 
        \bottomrule
    \end{tabular}
   
    \label{tab:workloads}

\end{table}


\stitle{Metrics.} The metrics used to evaluate the performance of the mechanisms are  \underline{It}erations  \underline{p}er  \underline{S}econd (ItpS), and the total embedding transmission  \underline{Cost} (Cost). To facilitate comparisons, the ItpS of LAIA is used as the reference for describing, with the performance improvements of other mechanisms expressed as a ratio to this reference.
\begin{equation*}
    \text{Speedup of A} = \frac{\text{ItpS(A)}}{\text{ItpS(LAIA)}}.
\end{equation*}
LAIA is also used as a reference for embedding transmission cost reduction.
\begin{equation*}
    \text{Cost Reduction of A} = \frac{\text{Cost(LAIA)-Cost(A)}}{\text{Cost(LAIA)}}.
\end{equation*}

\subsection{Overall Performance}\label{subsec:overall}

We first evaluate the overall performance of speedup and transmission cost reduction of \ouryi, \ourwu and \ourling and compare it with baselines under  the default setting. Fig.~\ref{fig:overall} shows the training speedup of \our \ouryi, \ourwu and \ourling over baselines across different workloads. We observe that \our can achieve $1.03\times$ to $1.74\times$ speedup compared to LAIA (used as the reference and not shown in Fig.~\ref{fig:overallperformance}). Specifically, we find that as $\alpha$ decreases, the speedup also decreases. This is reasonable because a larger $\alpha$ means more embedding samples are handled by \opt, resulting in lower embedding transmission cost per iteration. As shown in Fig.~\ref{fig:overallcost},  \ouryi reduces transmission cost by up to $36.76\%$ compared to LAIA, while \ourwu and \ourling achieve $10.81\%$ and $7.03\%$ reductions, respectively. Our experimental results indicate that FAE and HET consistently underperform compared to LAIA. Therefore, to focus on the improvements of \our over LAIA, we omitted the results of FAE and HET in subsequent experimental analyses.
\begin{figure}[h]
\centering
\includegraphics[width=0.8\linewidth]{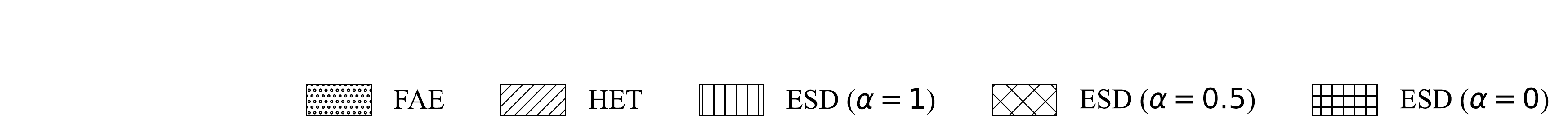}

\vspace{10pt} %

\begin{subfigure}{0.45\linewidth}
\centering
\includegraphics[width=\linewidth]{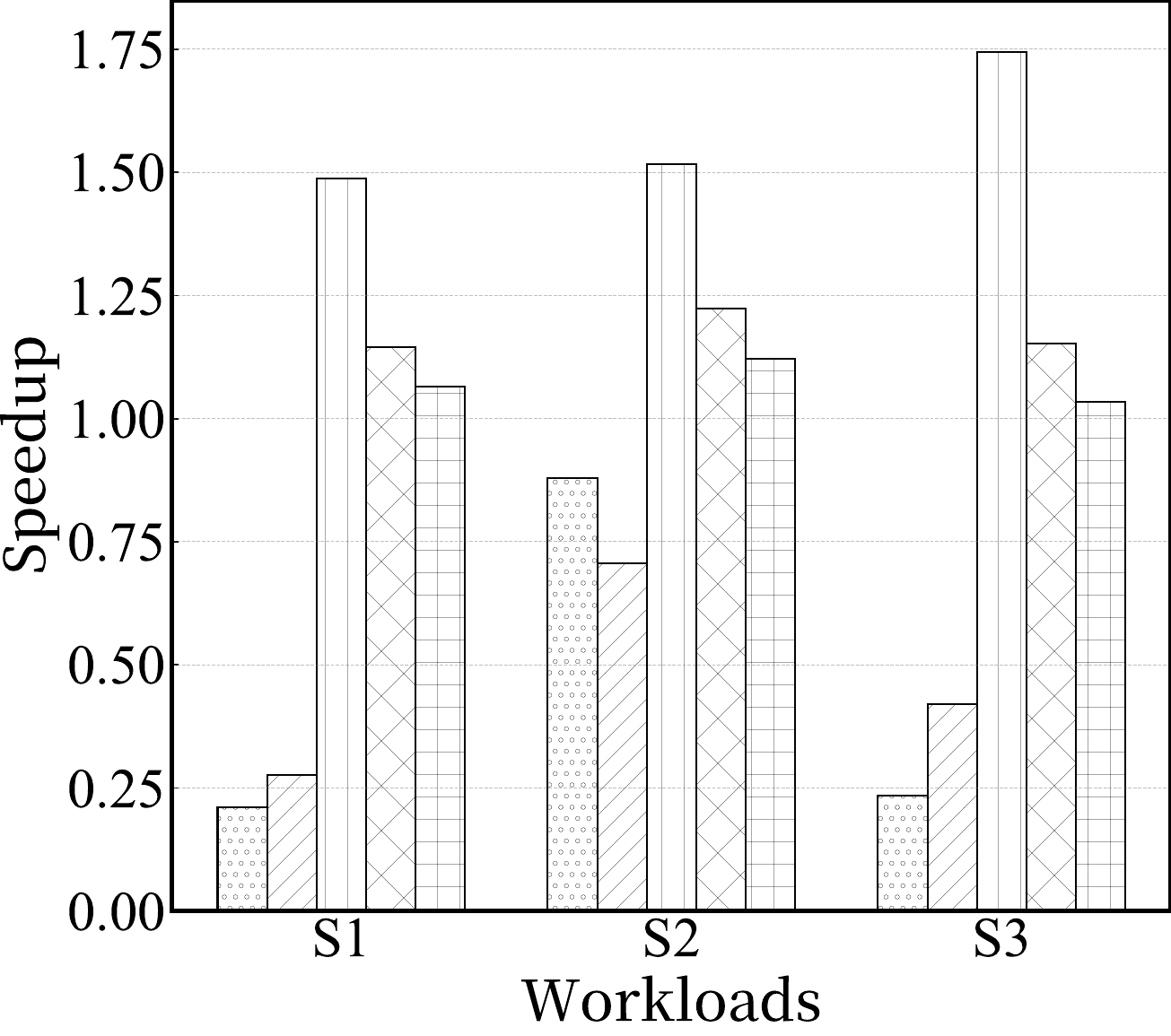}
\caption{ItpS Speedup}
\label{fig:overall}
\end{subfigure}
\hfill
\begin{subfigure}{0.45\linewidth}
\centering
\includegraphics[width=\linewidth]{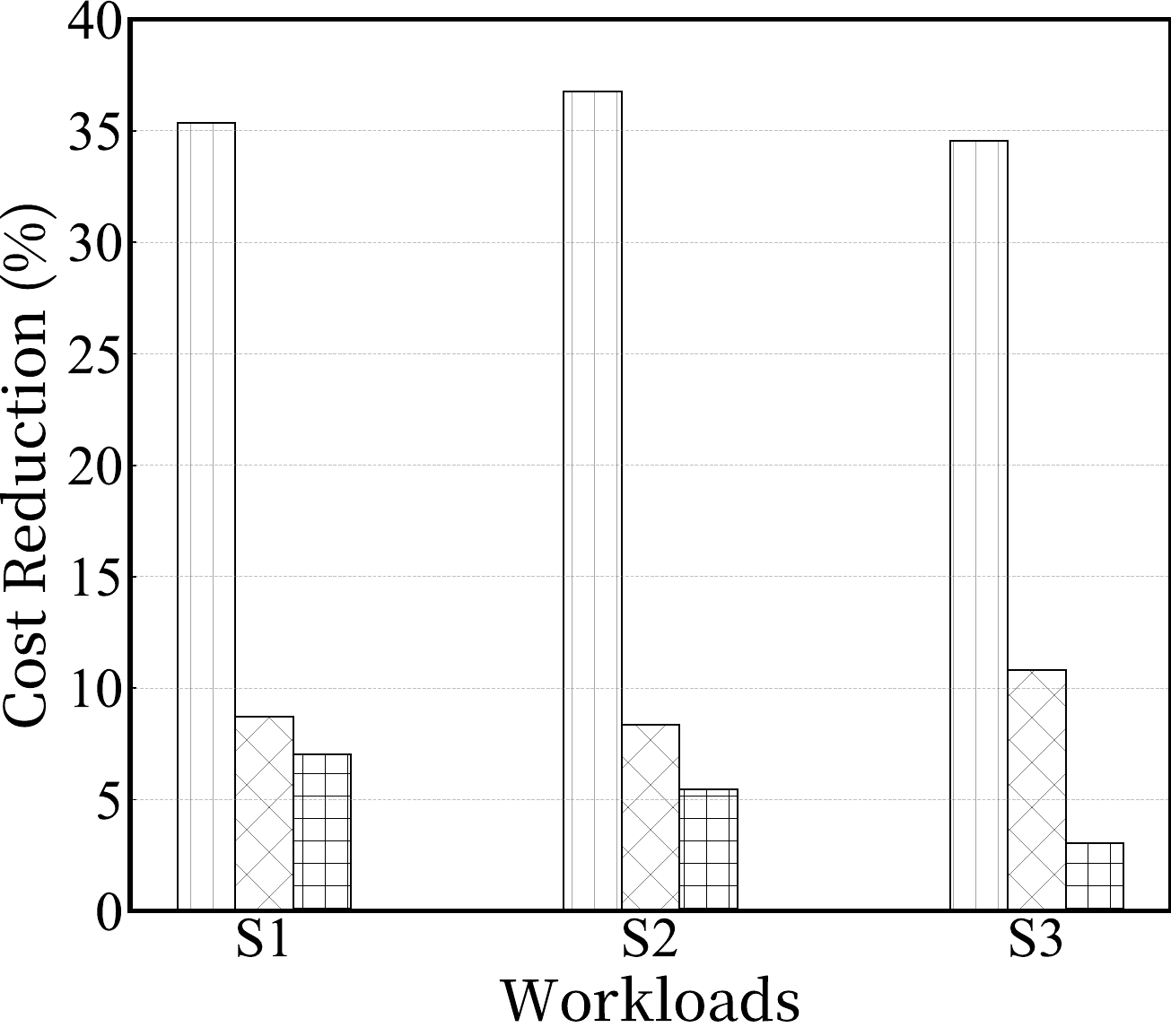}
\caption{Cost Reduction}
\label{fig:overallcost}
\end{subfigure}

\caption{Overall performance.} 
\label{fig:overallperformance}
\end{figure}



\subsection{Hit Ratio and Ingredient of Transmission Operations }\label{eva:ingredient}

To explore the reasons that affect the performance of these mechanisms, we show the hit ratio and  the ingredient of embedding transmission operations in Fig.~\ref{fig:hitandcost}. 

When we assume \(x_i\) exists in the input embedding samples for worker \(w_j\), if the latest version of \(Emb(x_i)\) is already cached in \(w_j\)'s embedding cache, it is considered a \textit{hit}. The hit ratio is defined as the fraction of embedding lookups that hit the cache relative to the total number of lookups.  We present the hit ratio for LAIA, \ouryi, \ourwu, and \ourling in Fig.~\ref{fig:hit}. Compared to LAIA, \our does not achieve a higher hit ratio. However, as shown in Fig.~\ref{fig:overallperformance}, \our reduces the cost and accelerates end-to-end training compared to LAIA. This is expected, as discussed in Sec.~\ref{sec:intro}, since embedding transmission cost arises not only from embedding misses but also from update push and evict push operations. Therefore, a higher hit ratio does not necessarily lead to reduced embedding transmission.

\begin{figure}[h]
\centering

\begin{subfigure}{0.48\linewidth}
\centering
\includegraphics[width=\linewidth]{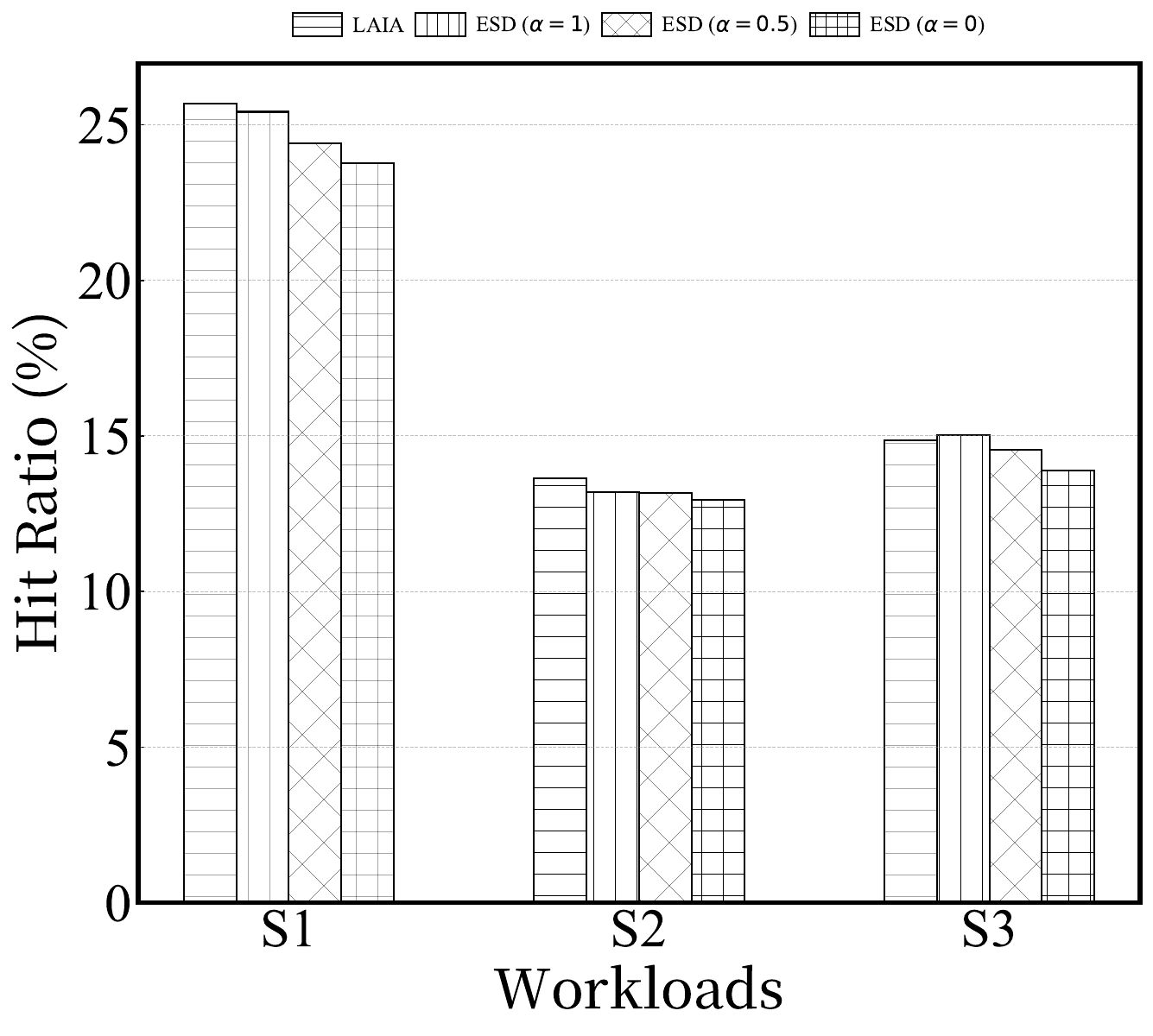}
\caption{Hit ratio}
\label{fig:hit}
\end{subfigure}
\hfill
\begin{subfigure}{0.48\linewidth}
\centering
\includegraphics[width=\linewidth]{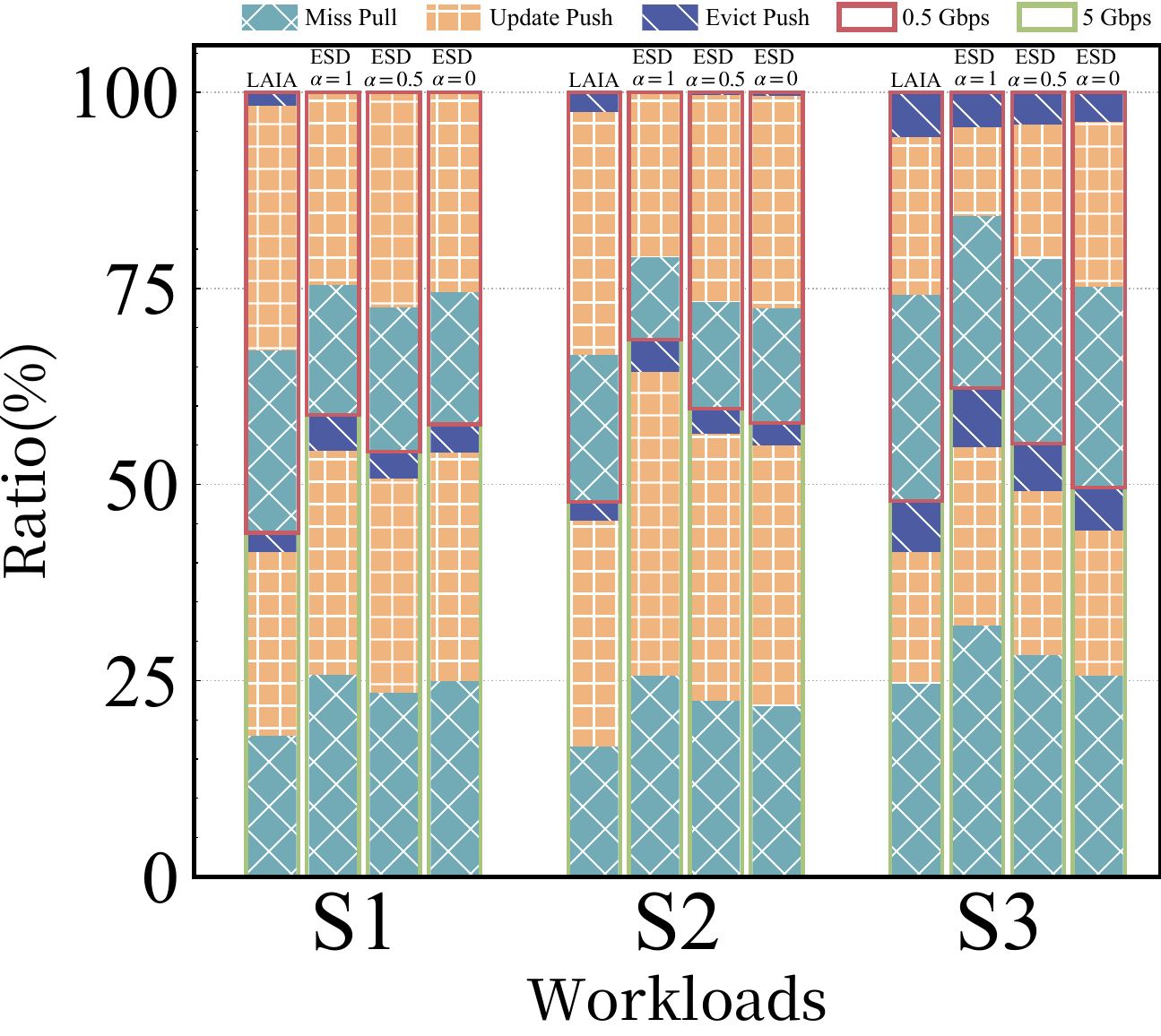}
\caption{Ingredient of trans. op.}
\label{fig:ingrecost}
\end{subfigure}

\caption{Hit ratio and Ingredient of transmission operations.} 
\label{fig:hitandcost}
\end{figure}

Fig.~\ref{fig:ingrecost} illustrates the ingredient of embedding transmission operations for different mechanisms, \ie,  the proportions of miss pull, update push, and evict push operations in the total transmission operations.  In each bar chart, the bottom three sections represent the results for 5 Gbps workers, while the upper sections correspond to 0.5 Gbps workers\footnote{For simplicity, ``5 Gbps worker(s)'' refers to workers connected to the parameter server via 5 Gbps Ethernet, and ``0.5 Gbps worker(s)'' similarly refers to workers connected via 0.5 Gbps Ethernet.}. As Fig.~\ref{fig:ingrecost} shows, compared to LAIA, \our increases the proportion of operations on 5 Gbps workers across all three workloads, whereas in LAIA, the proportion of operations on 5 Gbps workers is smaller than that on 0.5 Gbps workers. This demonstrates that \our effectively considers the heterogeneous networks.
Fig.~\ref{fig:ingrecost} also shows that miss pull and update push account for over 90\% of embedding transmission operations, while evict push contributes less than 10\%.

\subsection{Cost Reduction and  GPU Resource Consumption}

Due to limited resources on edge workers, this paper proposes \ourmix, a hybrid method to make the input embedding samples dispatch decision, with $\alpha$ representing the proportion dispatch using \opt. As shown in Fig.~\ref{fig:gputil}, when $\alpha$ is $1$, $0.5$, $0.25$, $0.125$, and $0$, with batch sizes per worker of $128$ and $256$, we demonstrate the cost reduction and the  GPU resource consumption (represented by GPU Utilization). For \ourling, the GPU utilization is $0$. When the batch size per worker is $128$, as shown in Fig.~\ref{fig:gputil128}, under different workloads, a larger $\alpha$ results in greater cost reduction and higher GPU utilization. On S1,  \ouryi achieves the maximum cost reduction, and the GPU utilization is $74.89\%$. When the batch size per worker is $256$, as shown in Fig.~\ref{fig:gputil256}, \our can reduce the embedding transmission cost by up to $40.84\%$. For \ourwu, when the batch size per worker is $128$, GPU utilization ranges from $17.59\%$ to $19.72\%$ across three workloads, while for a batch size of $256$, GPU utilization ranges from $50\%$ to $59\%$. Overall, a larger $\alpha$ results in greater cost reduction and higher GPU utilization. The setting of $\alpha$ depends on the limited GPU resources and the competition among multiple tasks on edge workers. However, even with $\alpha = 0$, \our can still reduce the transmission cost relative to LAIA.

It is important to note that accurately measuring GPU resource consumption is a challenging issue in computer systems~\cite{delestrac2024multi,shubha2024usher}. In Fig.~\ref{fig:gputil}, this paper uses the command \texttt{nvtop} to obtain the GPU utilization, which only provides a coarse-grained measure of GPU utilization. Moreover, GPU utilization of \our in Fig.~\ref{fig:gputil} is measured by executing \our independently on the GPU rather than during the training process. In fact, when training on workload S3 with batch size per worker is $128$, the GPU utilization observed via \texttt{nvtop} is approximately 55\% when \ouryi is executed concurrently during training, whereas the GPU utilization when running \ouryi alone, as shown in Figure 7(a), is approximately 58\%. Overall, the GPU utilization shown in Fig.~\ref{fig:gputil} is intended to illustrate that as $\alpha$ decreases, the GPU resources used are reduced. However, the displayed values do not represent the actual GPU resource consumption; that is, \ouryi does not actually consume up to 70\% of GPU resources as Fig.~\ref{fig:gputil} shows.
%



\begin{figure}[htbp]
\centering
\includegraphics[width=0.8\linewidth]{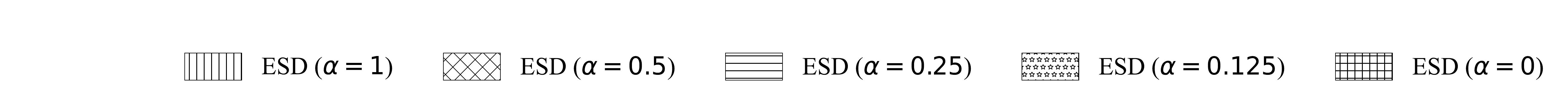}

\vspace{10pt} 

\begin{subfigure}{0.4\linewidth}
\centering
\includegraphics[width=\linewidth]{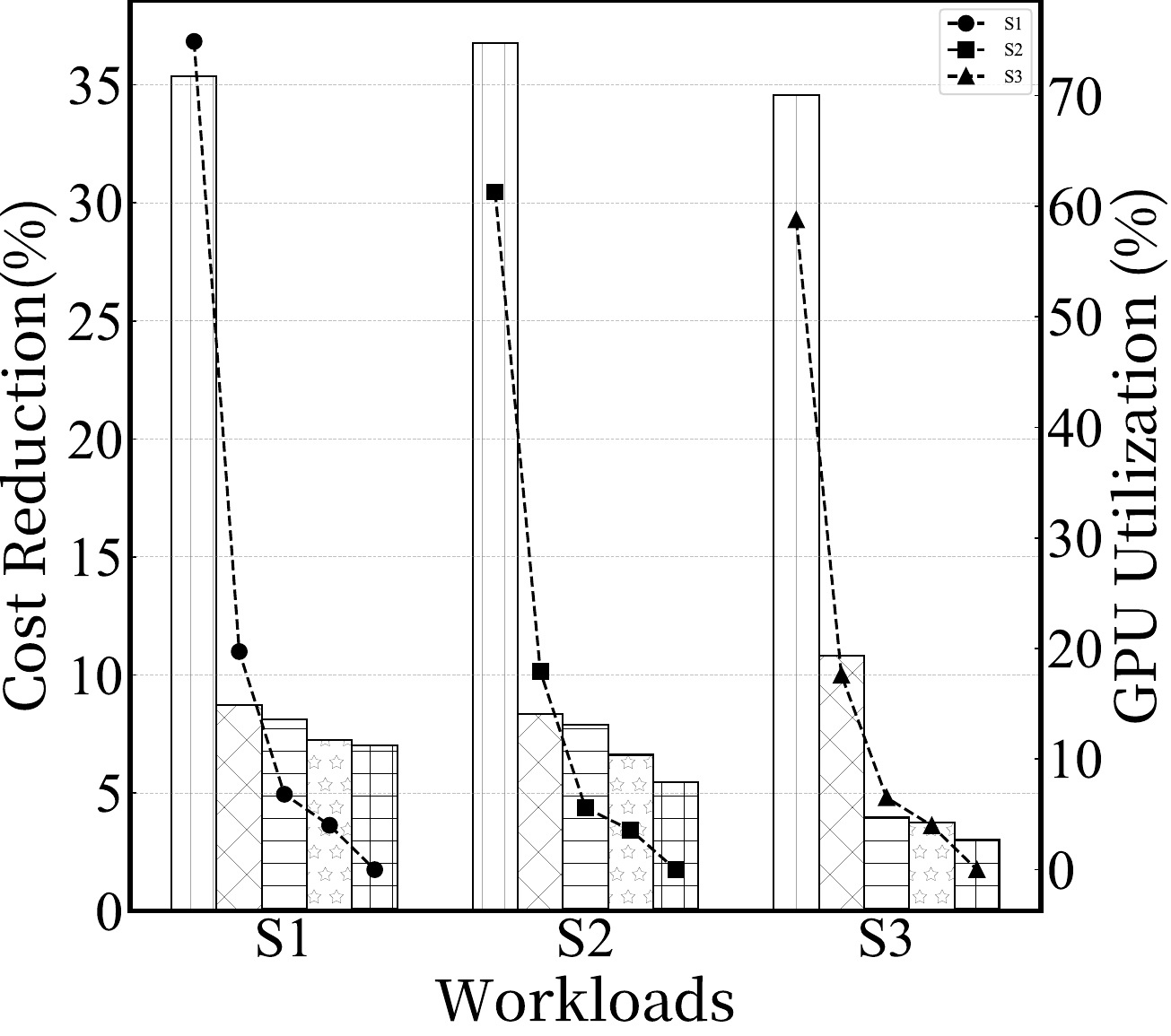}
\caption{Batch size per worker = 128}
\label{fig:gputil128}
\end{subfigure}
\hfill
\begin{subfigure}{0.4\linewidth}
\centering
\includegraphics[width=\linewidth]{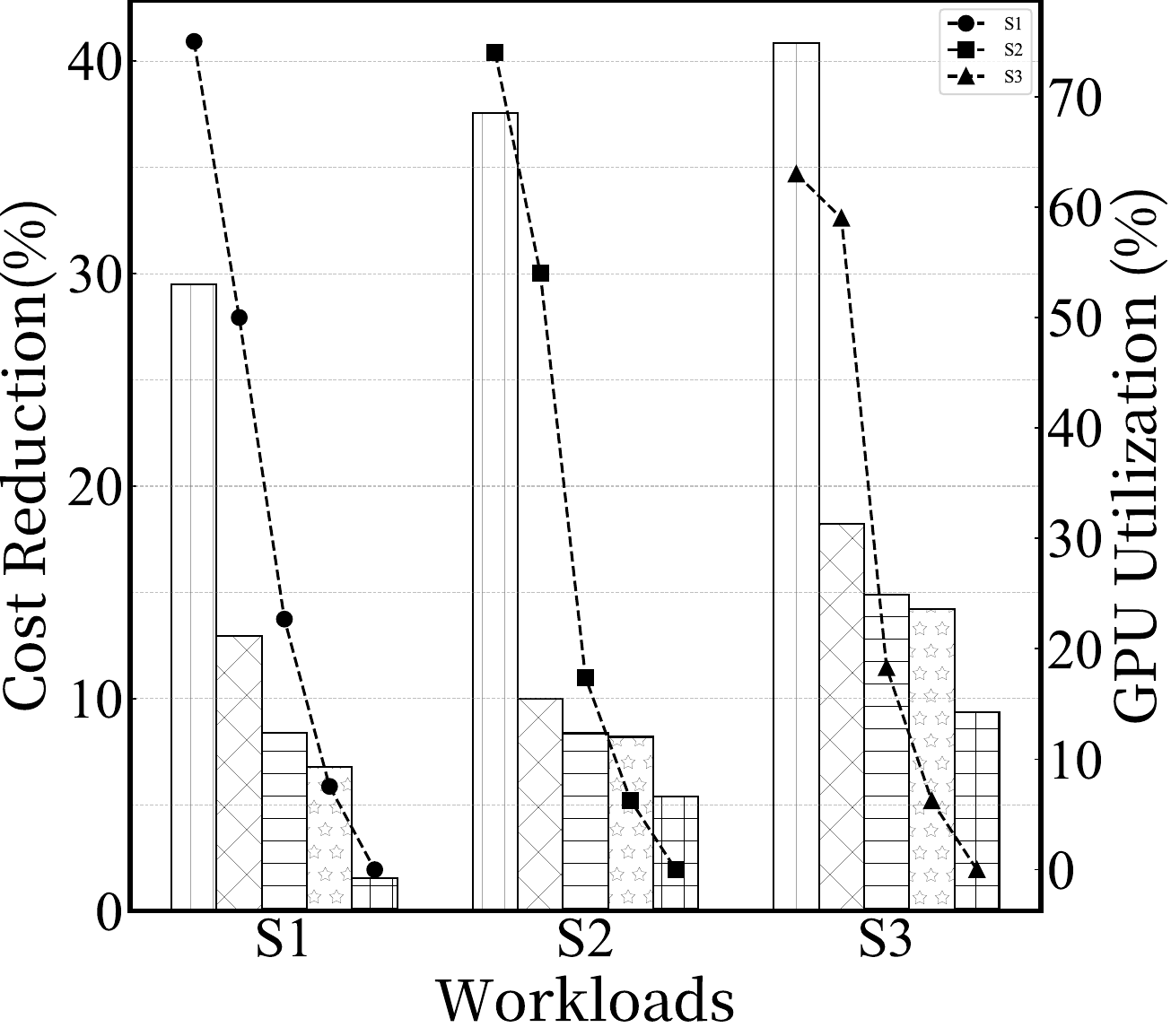}
\caption{Batch size per worker = 256}
\label{fig:gputil256}
\end{subfigure}

\caption{Cost reduction and GPU utilization.} 
\label{fig:gputil}
\end{figure}

\subsection{ Sensitivity Analysis}

Next, we explore the performance of \our when the batch size per worker, cache ratio, and embedding size change. Additionally, we assess the performance of \our with four workers and homogeneous networks (same bandwidth) between workers and the parameter server.  When one setting changes, other settings remain at their default setting, in this part, we mainly focus on workload S2.


\stitle{Batch Size Per Worker.}

As Fig.~\ref{fig:batchsize} shows, to investigate the impact of batch size per worker, we vary it from $64$ to $512$ and present the speedup and the cost reduction. When the batch size per worker increases from $64$ to $256$, the speedup trends for \ouryi, \ourwu, and \ourerwu are increasing, reaching up to $1.54\times$. However, when the batch size per worker increases from $256$ to $512$, the speedup for \ouryi, \ourwu, and \ourerwu does not surpass that at $256$ (although it remains above $1$). This is because increasing the batch size per worker increases the decision-making time for \ouryi and \ourwu, and the solution quality of  \ourling  decreases with the larger batch size per worker. Thus, the speedup for all three mechanisms declines. For the cost reduction of each mechanism, as the batch size per worker increases, the general trends of cost reduction and speedup are similar.  When the batch size per worker increases from $256$ to $512$, although \ouryi achieves greater cost reduction, the increased decision-making time reduces the speedup from $1.54$ to $1.23$.

\begin{figure}[htbp]
\centering
\includegraphics[width=0.8\linewidth]{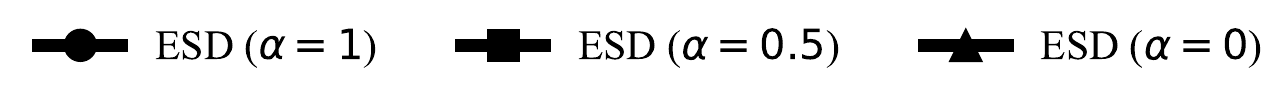}

\vspace{10pt} %

\begin{subfigure}{0.4\linewidth}
\centering
\includegraphics[width=\linewidth]{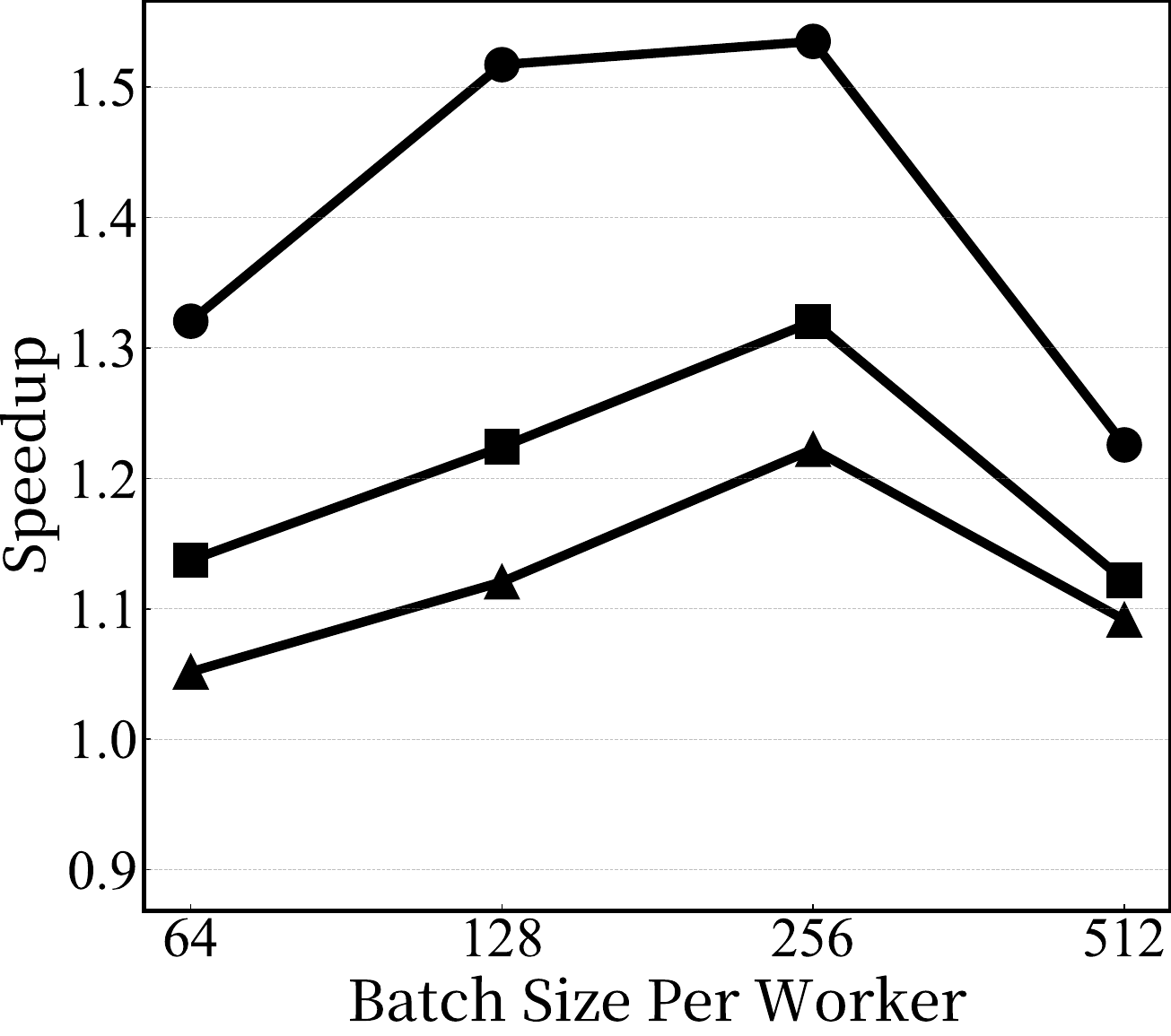}
\caption{ItpS Speedup}
\label{fig:batchsizespeed}
\end{subfigure}
\hfill
\begin{subfigure}{0.4\linewidth}
\centering
\includegraphics[width=\linewidth]{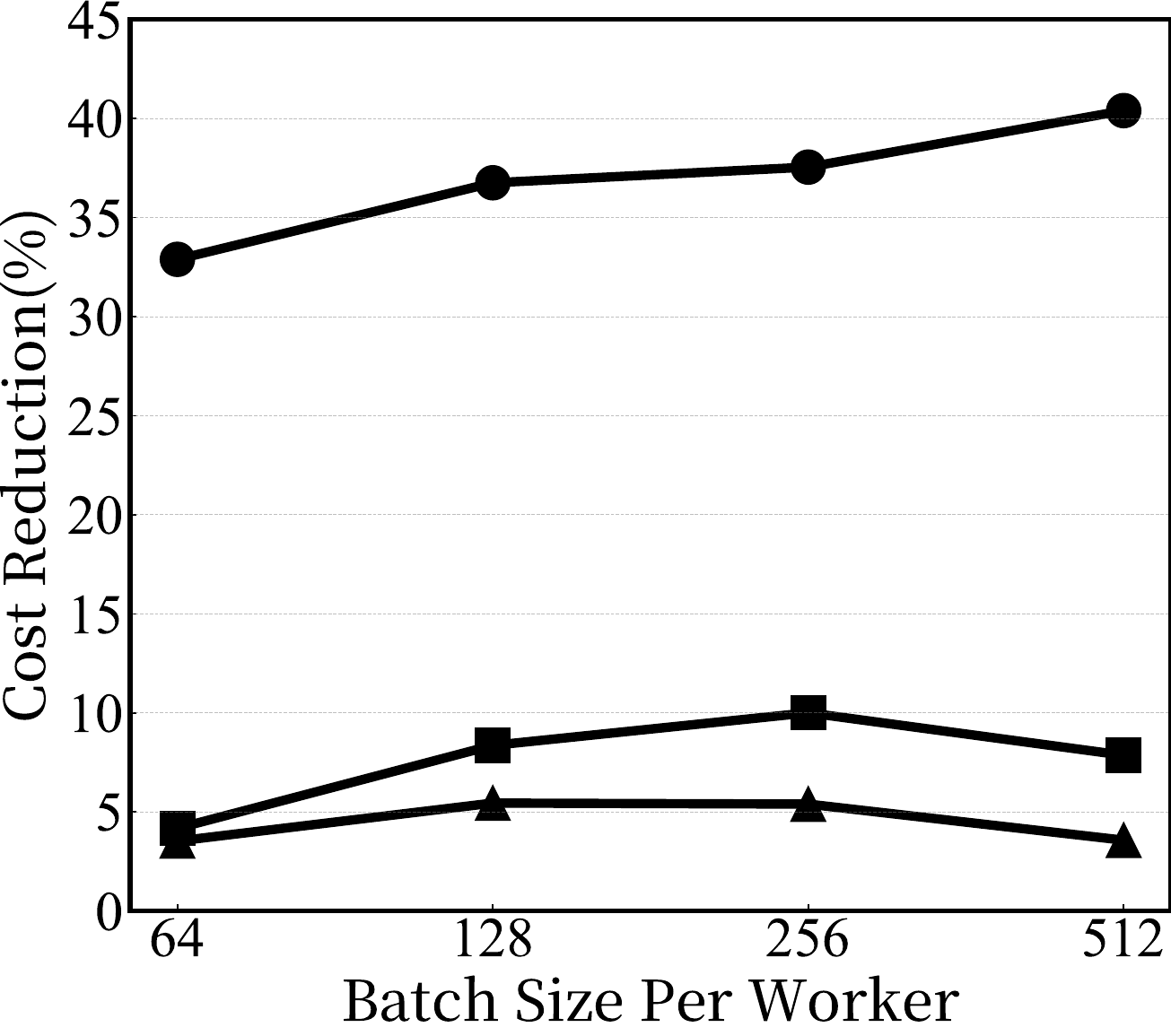}
\caption{Cost Reduction}
\label{fig:batchsizecost}
\end{subfigure}

\caption{Impact of batch size per worker.} 
\label{fig:batchsize}
\end{figure}



\stitle{Cache Ratio.} We show the result of the impact of cache ratio in Fig.~\ref{fig:cachesize}. The cache ratio, \ie, the ratio of the number of in-cache embeddings to the total number of embeddings, ranges from 4\% to 10\%.  Fig.~\ref{fig:cachesize} shows that \our preserves the performance superiority over LAIA and the speedup for the same $\alpha$ does not vary a lot across different sizes of embedding cache. This experiment verifies the consistent effectiveness of \our under different cache ratios.
\begin{figure}[htbp]
\centering
\includegraphics[width=0.8\linewidth]{fig/legend_sca.pdf}


\begin{subfigure}{0.4\linewidth}
\centering
\includegraphics[width=\linewidth]{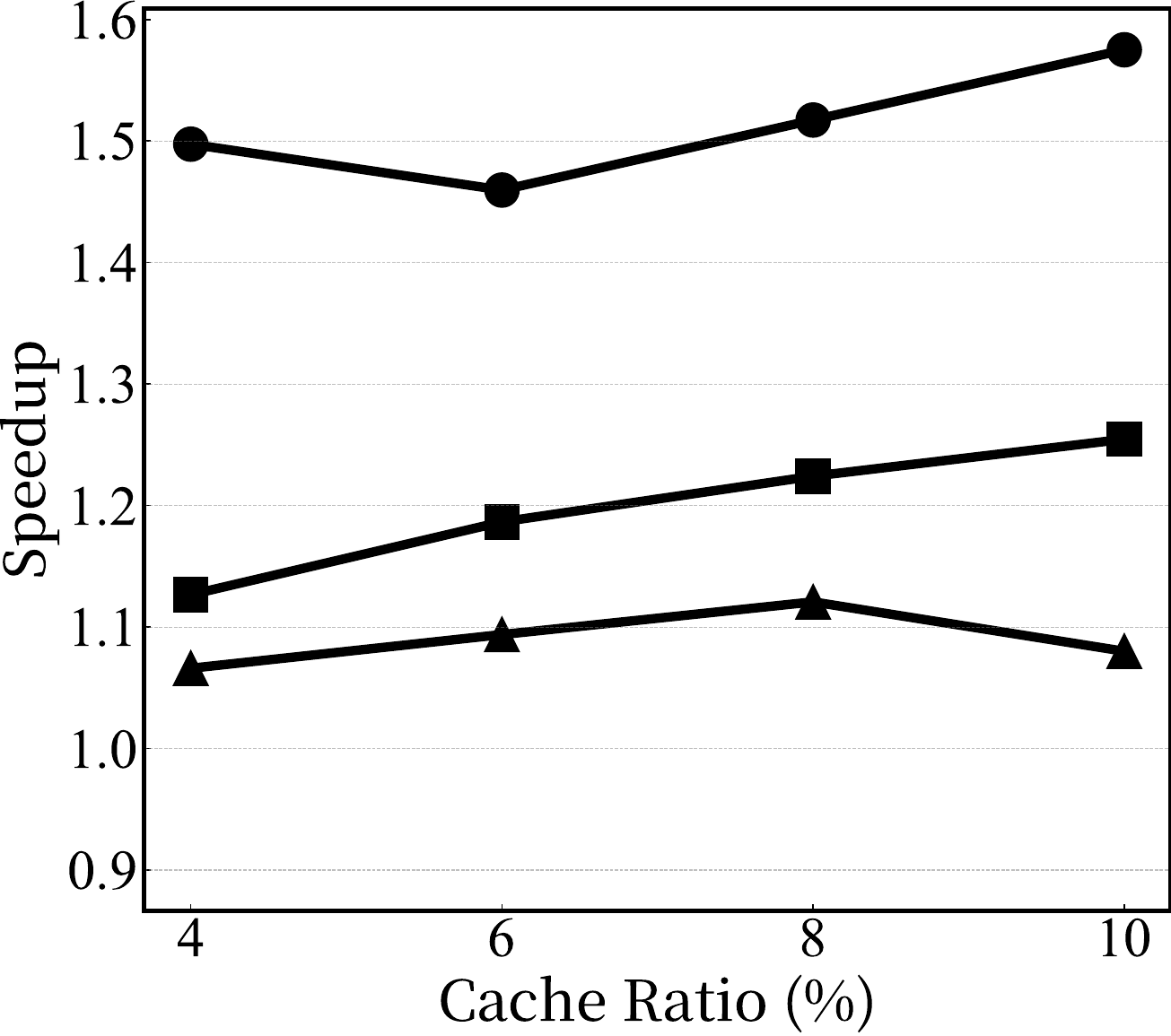}
\caption{ItpS Speedup}
\label{fig:cachesizespeed}
\end{subfigure}
\hfill
\begin{subfigure}{0.4\linewidth}
\centering
\includegraphics[width=\linewidth]{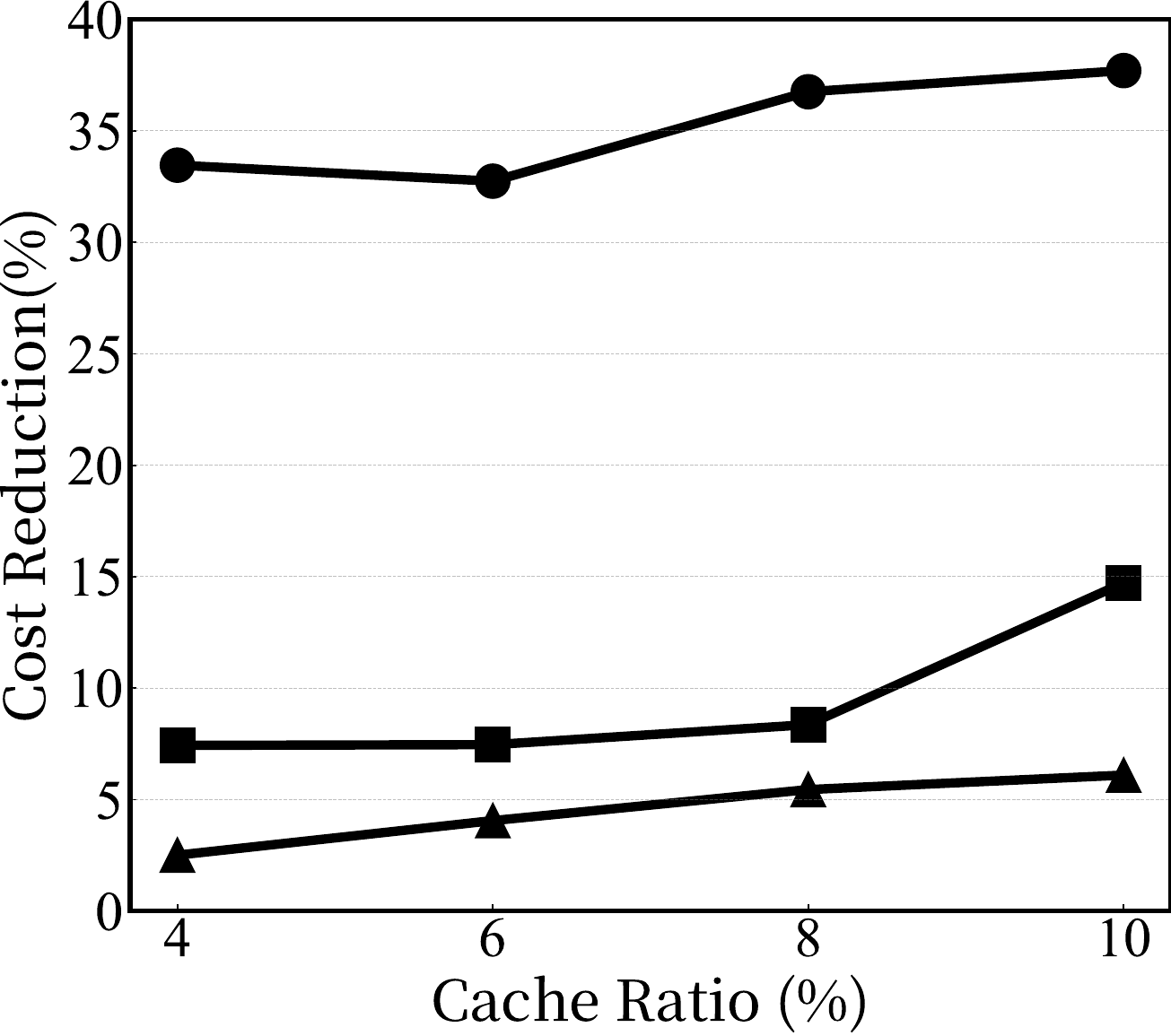}
\caption{Cost Reduction}
\label{fig:cachesizecost}
\end{subfigure}

\caption{Impact of cache ratio.} 
\label{fig:cachesize}
\end{figure}

\stitle{Embedding Size.}
Embedding size \ie, the dimension of the embedding vector, is an important factor affecting the performance of DLRM model. As shown in Fig.~\ref{fig:embeddingsize}, we test the speedup and cost reduction variations of \ouryi, \ourwu, and \ourling for embedding sizes of $128$, $256$, $512$, and $1024$. When the embedding size increases, as shown in Fig.~\ref{fig:embeddingspeed}, the speedup of all three mechanisms compared to LAIA increases, reaching up to $1.59\times$. The reason is that as the embedding size increases, the data size of each embedding pull and push increases, \ie, $D_{tran}$ becomes larger, thereby increasing the embedding transmission cost. When the transmission cost of each embedding increases, the effectiveness of \our becomes more pronounced.  For cost reduction, changes in embedding size only affect \(D_{tran}\) and do not impact the cost reduction of \ouryi, \ourwu, and \ourling relative to LAIA. For instance, when the embedding size increases from 256 to 512, \(D_{tran}\) doubles, causing the embedding transmission cost for both LAIA and \our to double as well. However, the relative cost reduction of LAIA remains unchanged.


\begin{figure}[htbp]
\centering
\includegraphics[width=0.8\linewidth]{fig/legend_sca.pdf}

\vspace{10pt} 

\begin{subfigure}{0.4\linewidth}
\centering
\includegraphics[width=\linewidth]{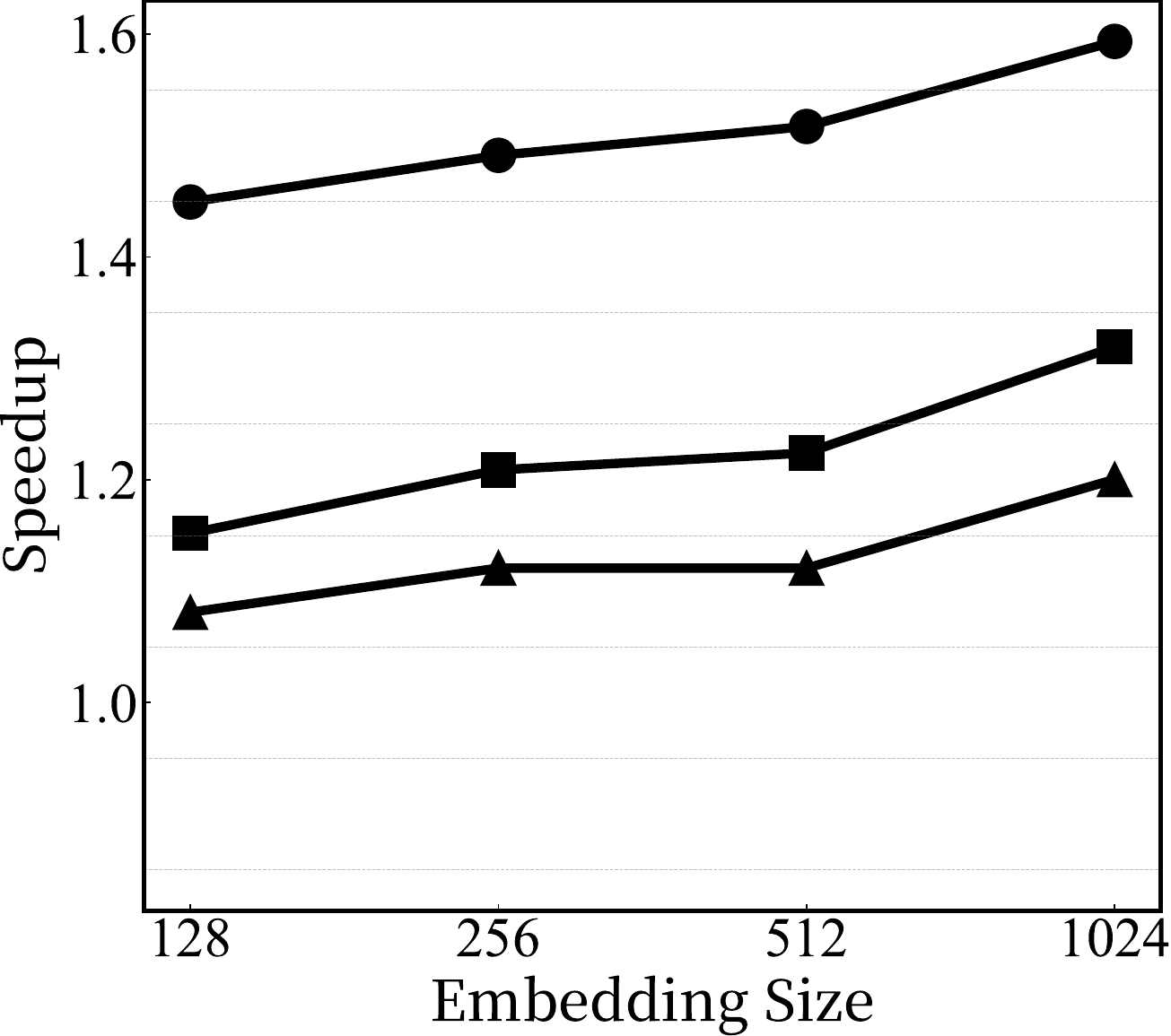}
\caption{ItpS Speedup}
\label{fig:embeddingspeed}
\end{subfigure}
\hfill
\begin{subfigure}{0.4\linewidth}
\centering
\includegraphics[width=\linewidth]{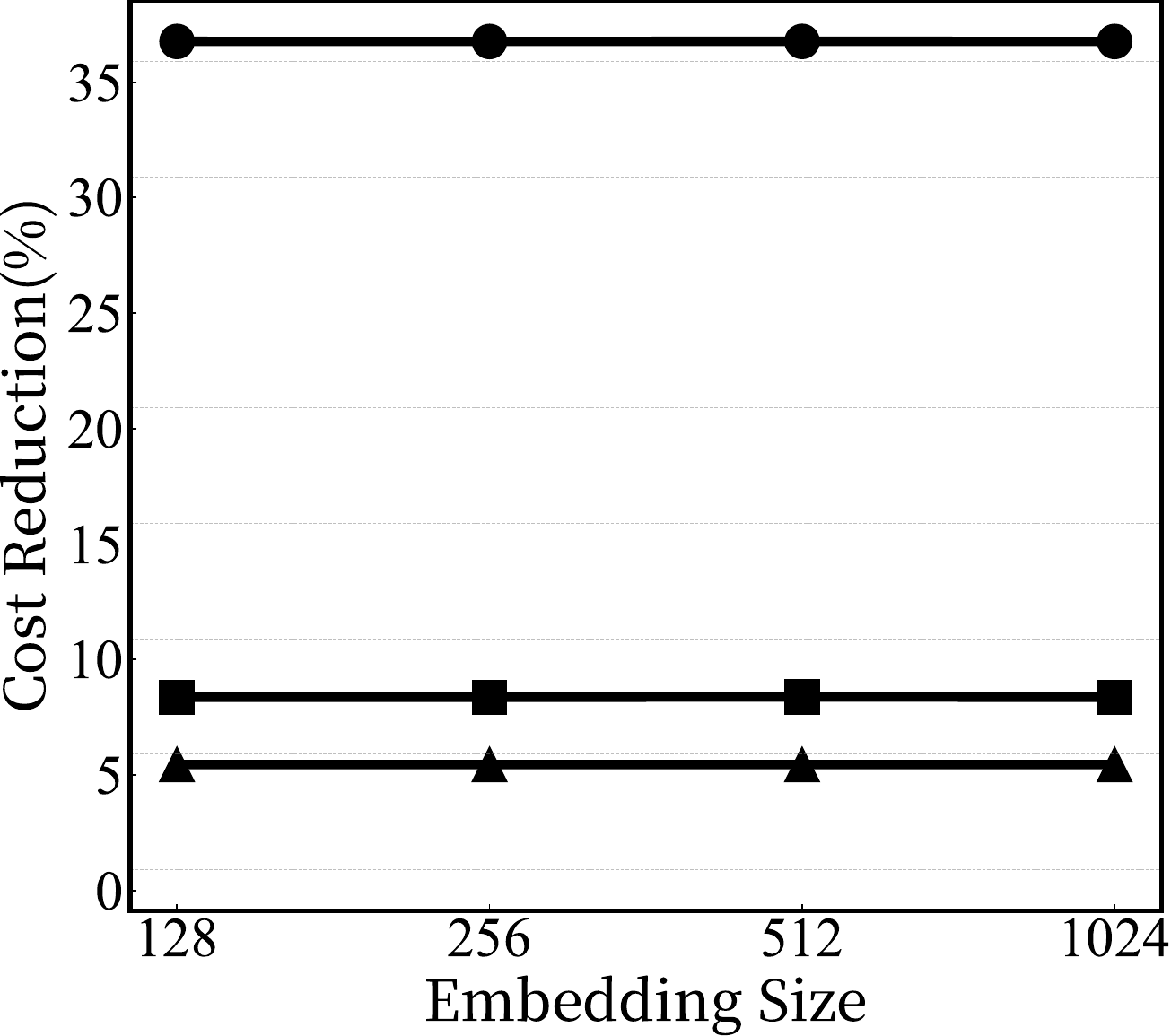}
\caption{Cost Reduction}
\label{fig:embeddingcost}
\end{subfigure}

\caption{Impact of embedding size.} 
\label{fig:embeddingsize}
\end{figure}

\stitle{Worker Number and Network Bandwidths.}

Previous experiments were conducted with $8$ edge workers where the network bandwidth between workers and the PS is heterogeneous. To verify the effectiveness of \our with different numbers of workers and with homogeneous network bandwidth between workers and the PS, we tested the performance of \ouryi, \ourwu, and \ourling under the following settings: 1) Using four edge workers, with two $5$ Gbps workers and two $0.5$ Gbps workers. 2) Using four $5$ Gbps workers. The experimental results are shown in Fig.~\ref{fig:four}. Under two settings, for the three mechanisms and three workloads, the speedup ranges from $1.07\times$ to $1.31\times$ and $1.03\times$ to $1.23\times$, respectively. For cost reduction, when using two 5 Gbps workers and two 0.5 Gbps workers, the cost reduction ranges from $6.06\%$ to $42.15\%$. In contrast, when using four 5 Gbps workers, the cost reduction ranges from $0.33\%$ to $29.11\%$. The experiments show that \our is effective with four workers in homogeneous networks and performs better in heterogeneous networks, aligning with our motivation.


\begin{figure}[h]
\centering
\includegraphics[width=0.8\linewidth]{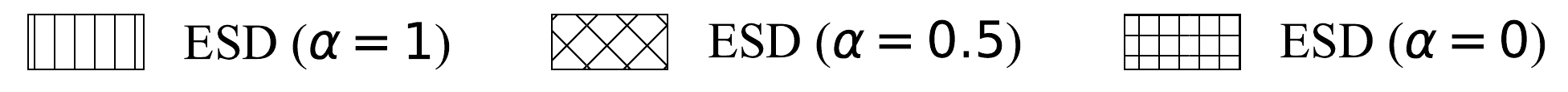}

\vspace{10pt} 

\begin{subfigure}{0.4\linewidth}
\centering
\includegraphics[width=\linewidth]{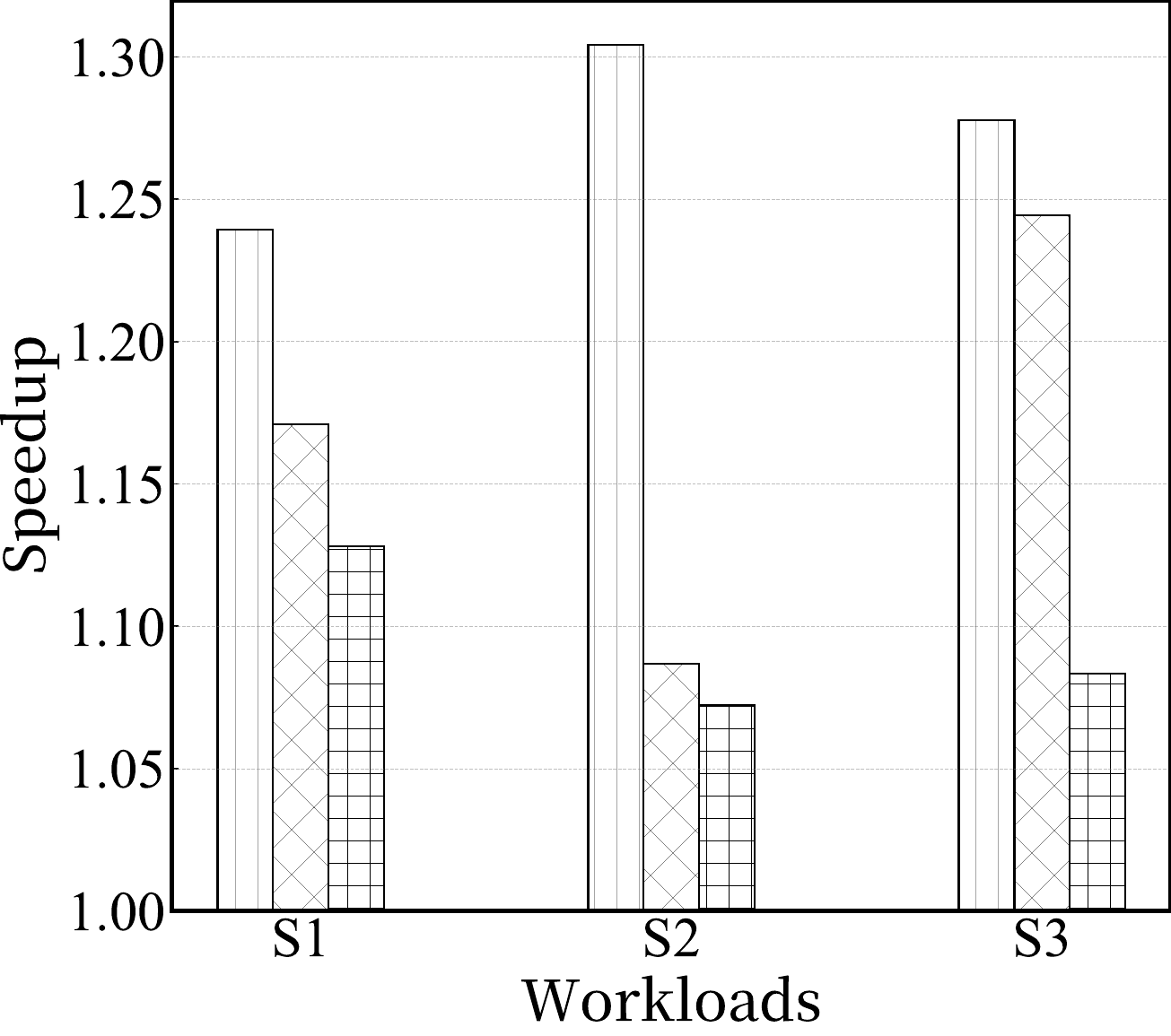}
\caption{ItpS Speedup (Heterogeneous)}
\label{fig:4yigouspeed}
\end{subfigure}
\hfill
\begin{subfigure}{0.4\linewidth}
\centering
\includegraphics[width=\linewidth]{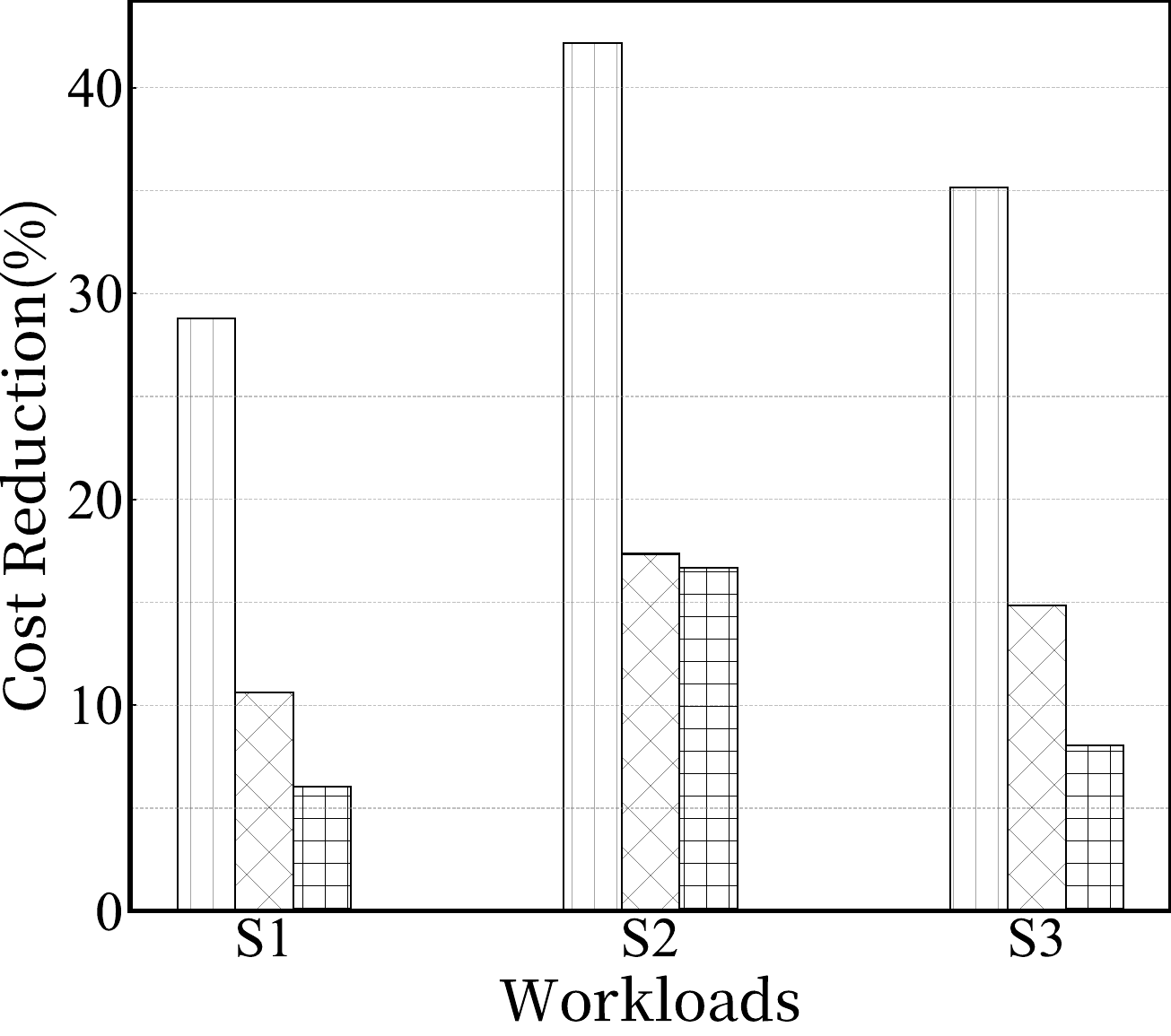}
\caption{Cost Reduction (Heterogeneous)}
\label{fig:4yigoucost}
\end{subfigure}

\vspace{10pt} 

\begin{subfigure}{0.4\linewidth}
\centering
\includegraphics[width=\linewidth]{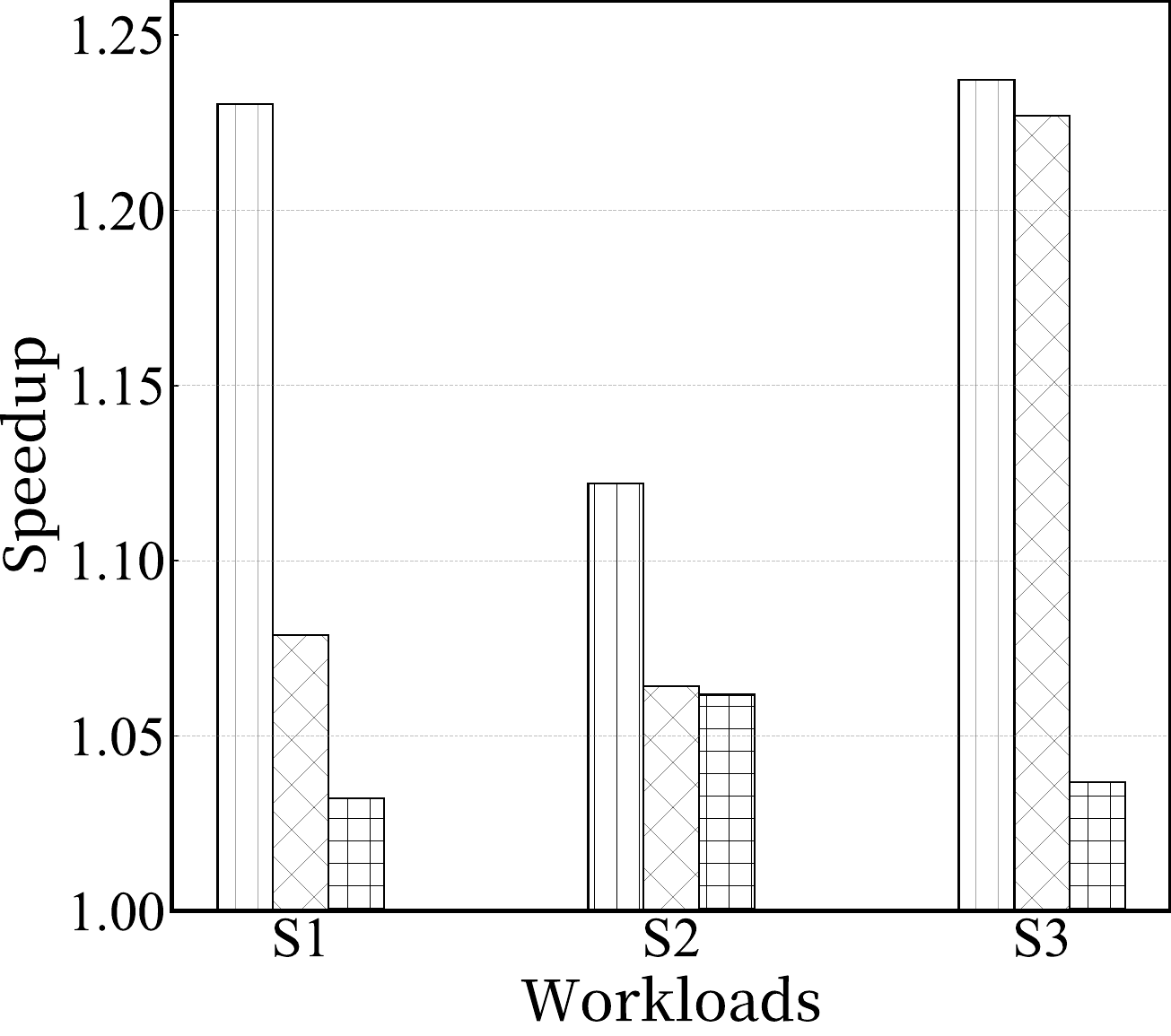}
\caption{ItpS Speedup (Homogeneous)}
\label{fig:4tonggouspeed}
\end{subfigure}
\hfill
\begin{subfigure}{0.4\linewidth}
\centering
\includegraphics[width=\linewidth]{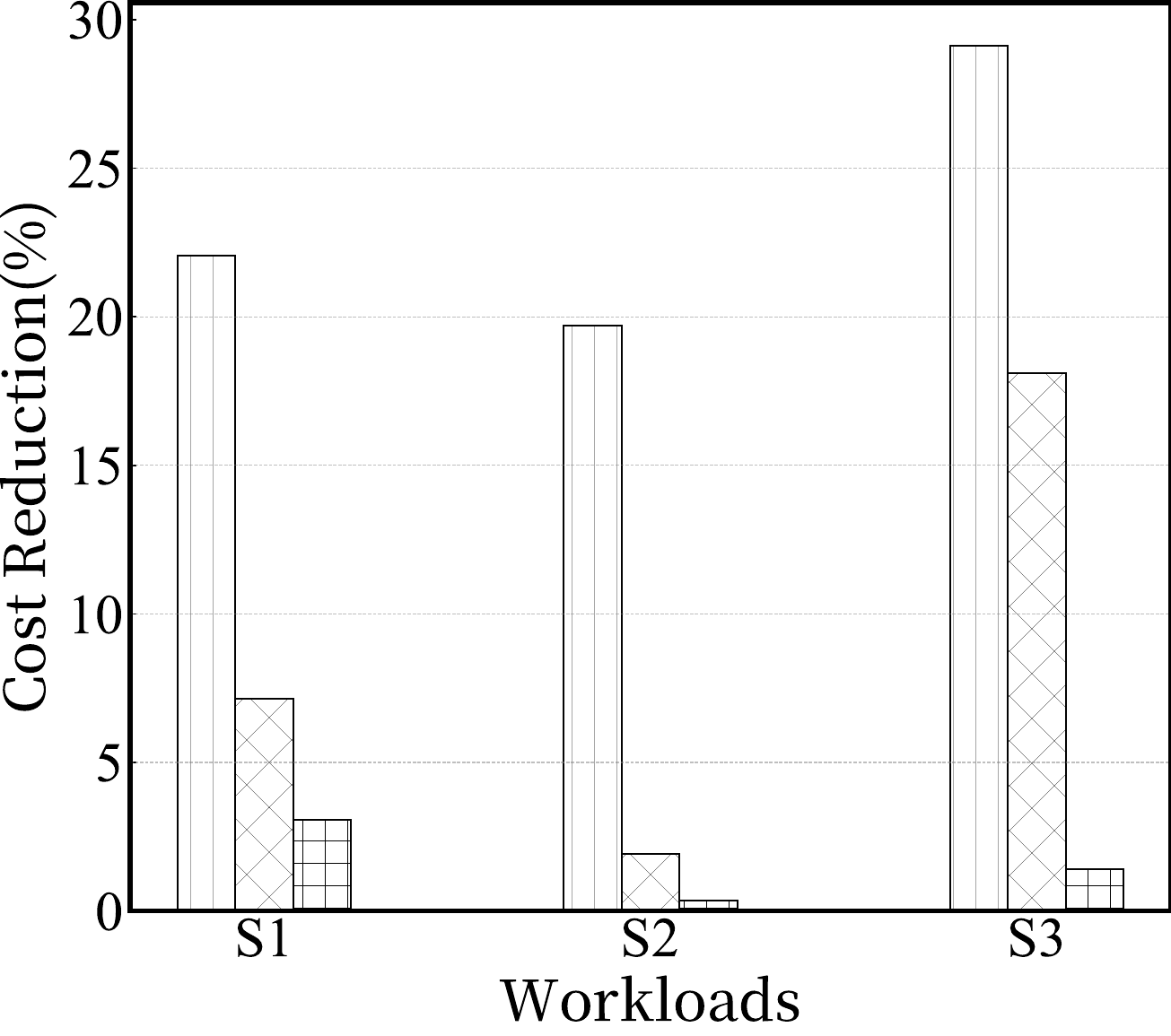}
\caption{Cost Reduction (Homogeneous)}
\label{fig:4tonggoucost}
\end{subfigure}

\caption{Experiment results when using four workers.} 
\label{fig:four}
\end{figure}



\section{Related Works}\label{sec:related}
\subsection{Recommender Systems for Mobile and Ubiquitous Computing}
The recommender systems designed for mobile and ubiquitous computing typically focus on privacy protection and efficiency. These systems are increasingly deployed in environments where user data is distributed across devices, such as mobile phones and edge devices, requiring approaches that reduce communication costs and respect data privacy.

HeteFedRec~\cite{yuan2024hetefedrec} is a federated recommendation framework that allows personalized model sizing for participants, leveraging dual-task learning for additive aggregation, dimensional decorrelation regularization to prevent collapse, and relation-based knowledge distillation to enhance knowledge sharing. Yuan~\etal~\cite{yuan2024hide} propose a novel parameter transmission-free federated recommendation framework, PTF-FedRec, which facilitates collaborative learning via transmitting predictions between clients and the central server, while ensuring privacy through a sampling and swapping mechanism for local model predictions.  RelayRec~\cite{deng2024relayrec} is a privacy-preserving cloud-device relay on-device learning framework for CTR prediction, combining clustered meta-learning to train preference-specific models, an automated model selector for personalized initialization, and collaborative learning to enhance local prediction performance using decentralized private data. HFSA~\cite{li2023hfsa} is a semi-asynchronous hierarchical federated recommendation system that reduces communication overhead through edge aggregation, improves performance with a semi-synchronous mechanism, and ensures global model convergence by allowing slow clients to asynchronously contribute their parameters, making it suitable for dynamic mobile and edge environments. DualRec~\cite{zhang2025dualrec} is a  novel collaborative training framework that combines federated learning with efficient model aggregation and denoising mechanisms, allowing devices to train lightweight models with local data while a larger model is trained on the cloud, enhancing recommendation system performance by generating pseudo user interactions and distilling knowledge between the device and cloud models. PREFER~\cite{guo2021prefer} is an edge-accelerated federated learning framework for Point-of-Interest recommendation, where user-independent parameters are shared among users and aggregated on edge servers. PEPPER~\cite{belal2022pepper} is a decentralized recommender system based on gossip learning principles, enabling users to train personalized models asynchronously. This paper focuses on providing recommendation services for mobile and ubiquitous applications, emphasizing the improvement of training efficiency for deep learning recommendation models with huge embedding tables in edge environments. It accelerates model training by reducing embedding transmission costs through embedding samples dispatching, making it  suitable for resource-limited devices in mobile  and  ubiquitous computing scenarios.

\subsection{Accelerate Deep Learning Recommendation Models Training}

Part of prior works focuses on minimizing the number of necessary transmissions~\cite{adnan2021accelerating,lian2022persia,miao2021het}. FAE~\cite{adnan2021accelerating}  decreases embedding transfers from CPUs to GPUs by storing popular embeddings on GPUs. Persia~\cite{lian2022persia} advocates mixing the synchronous and asynchronous mechanisms to update MLP and embedding tables, respectively. HET~\cite{miao2021het} develops a caching consistency model that allows staleness in cache read and write operations to minimize transmission overhead, while LAIA~\cite{zeng2024accelerating} schedules embeddings to reduce transmission cost without sacrificing accuracy. Other efforts focus on reducing   per-transmission latency. Ugache~\cite{song2023ugache} introduces a decomposition extraction mechanism to prevent bandwidth congestion and utilizes NVLink and NVSwitch to accelerate remote GPU memory access.   AdaEmbed~\cite{lai2023adaembed} and CAFE~\cite{zhang2024cafe} aim to reduce the transmission cost of embeddings by compressing and pruning embeddings. ScaleFreeCTR~\cite{guo2021scalefreectr} employs a mixcache mechanism to eliminate transmission cost between hosts and GPUs, and leverages virtual sparse ID  to reduce transmission volume.  In this paper, we focus on training DLRM in edge environments. We propose \our to address the unique challenges of edge environments, such as limited resources and heterogeneous networks, to minimize the total embedding transmission cost and thus accelerate DLRM training.

\section{Discussion}\label{sec:dis}

\subsection{Cache Replacement Policy}

During the training of DLRM, when the embedding cache reaches capacity, one or more cached embeddings must be evicted to accommodate new ones. Under on-demand synchronization, if the gradients of the evicted embeddings have not yet been synchronized with the PS, an evict push operation is triggered.   In this work, we design  \ourlru, a marking-based   policy to manage cached embeddings  in each worker to reduce the number of evict push operations, considering recency, frequency, and the version of embedding.  When embedding id $x_i$ is dispatched to $w_i$, \ourlru assigns  a mark (\target) to $Emb(x_i)$, \target is a positive integer starting at $1$. When the cache is full, and all embeddings'   marks are \target, \target$+=1$. For cache replacement, when the cache is full, \ourlru first evicts outdated embeddings, then compares marks in ascending order, and finally compares access frequency. In our implementation, we overload the $<$ operator in C++ to achieve this, with the latest version marked as $1$ and the outdated version as $0$. Additionally, by adjusting the increment method for \target and the comparison order in the overloaded $<$ operator, \ourlru can easily adapt to different priority settings among frequency, recency, and version.  Since Fig.~\ref{fig:ingrecost} shows  that evict push contributes less than $10\%$ of  transmission operation, we discuss the cache replacement policy here rather than in Sec.~\ref{sec:design}. Additionally, incorporating the embedding eviction and  replacement policy into calculating the expected transmission cost is an effective way to reduce evict push operations, and we leave it for future work.

\subsection{Non-uniform Embedding Size}

In this paper, we propose \our to dispatch input  embedding samples to workers based on calculating the expected transmission cost under the assumption of uniform embedding vector dimensions (\ie, embedding size). However, recent trends indicate the use of non-uniform embedding sizes to reduce the memory footprint, training time, and inference time of DLRM~\cite{zhaok2021autoemb, luo2024fine, lai2023adaembed}. When embedding sizes are non-uniform, \our can effectively handle this through the following adjustments. First, for calculating the expected transmission cost, different data sizes ($D_{tran}$) can be used to account for varying embedding dimensions. Secondly, regarding the cache replacement policy, non-uniform embedding sizes result in differing memory footprints within the embedding cache. This can be managed using a priority metric. The priority for each embedding in the cache is calculated by the frequency, recency, and version (latest or outdated) as the numerator, while the memory footprint of each embedding serves as the denominator. When the cache reaches capacity, embeddings with the lowest priorities are evicted first. A future direction is to explore embedding size configurations for training and deploying DLRM in edge environments.

\section{Conclusion}\label{sec:conclu}
Training and deploying DLRM on edge workers is an effective way to deliver recommendation services for modern mobile and  ubiquitous applications while ensuring privacy protection and low latency. This paper focuses on dispatching input embedding samples based on expected embedding transmission costs to minimize the total embedding transmission cost. We highlight the challenges of designing a dispatch mechanism in edge environments, including cost composition, heterogeneous networks, and limited resources. To address these challenges, we propose a dispatch mechanism called \our, incorporating a hybrid decision-making method \ourmix as a key component. We implement \our using C++ (including CUDA) and Python and conduct experiments on  edge workers with typical workloads to validate the improvement over the state-of-the-art mechanism. Building an efficient system for modern mobile and  ubiquitous applications requires optimization in data collection, processing, model training, deploying, and more. We hope this work contributes a step towards building an  efficient recommender system for mobile and  ubiquitous computing. This work does not raise any ethical issues.

\clearpage
\bibliographystyle{plain}
\bibliography{references}

\end{document}